	\def\mref#1#2{\hyperref[#2]{#1~\ref*{#2}}}
	\def\meqref#1#2{\hyperref[#2]{#1~(\ref*{#2})}}
	\def\mvref#1#2{\hyperref[#2]{#1~}\vref{#2}}
	\def\mref#1#2{#1~\ref{#2}}
	\def\meqref#1#2{#1~\eqref{#2}}
	\def\mvref#1#2{#1~\vref{#2}}
\def\bigdissum{\sum}
\def\dissum{+}
\def\indep{\perp}
\def\alg{\mathcal}
\def\A{\mathfrak{A}}
\def\genalg{\mathop{\sigma}}
\def\indalg#1{\langle #1 \rangle}
\def\restr#1{\mathord\mid_{#1}}
\def\charf#1{1_{#1}}
\def\m{\mathfrak{m}}
\def\F#1#2{\mathfrak{F}_{#1#2}}
\def\Fl#1#2#3{\mathfrak{F}_{#1#2}^{#3}}
\def\T#1#2{\mathcal{T}_{#1#2}} %{\mathbin{ltimes}\{#1,#2\}}
\def\expected{\text{\large $E$}}
\def\variance{\text{\large $D^2$}}
\def\symdiff{\vartriangle}
\def\tendsto{\longrightarrow}
\def\Img{\mathop{\mathrm{Im}}}
\def\ssum_#1{\smashoperator{\sum_{#1}}}
\def\sbigdissum_#1{\smashoperator{\bigdissum_{#1}}}
\def\numberfield{\mathbf}
\def\varkappa{\kappa}
\def\min{\mathop{\mathrm{min}}\nolimits}
\newcommand{\tsum}{\mathop{\textstyle \sum }}%
\newcommand{\defem}[1]{\textbf{#1}}
\renewcommand{\emph}{\textit}
\newcommand{\Emph}{\textbf}
\newcommand{\dash}{\nobreakdash-\hspace{0pt}}
\DeclarePairedDelimiter\abs{\lvert}{\rvert}
\DeclarePairedDelimiter\norm{\lVert}{\rVert}
\def\numberfield{\mathds}
\newcommand{\R}{ \numberfield{R} }
\newcommand{\tim}{\cdot}
\newtheorem{thm}{Theorem}
\newtheorem{rmk}{Remark}
\newtheorem{crr}{Corollary}
\newtheorem{prn}{Proposition}
\newtheorem{lem}{Lemma}
\newtheorem{dfn}{Definition}
\newtheorem{exm}{Example}
\newtheorem*{darboux}{Darboux Property}
\def\new#1{{\color[rgb]{0.6,0.15, 0.05}#1}}
\def\mu{P}
\begin{document}

	\begin{abstract}
		We provide, under proper assumptions, a
		description of \defem{additive partition entropies}.
		They are real functions~$I$ on the set of finite partitions
		that are additive on stochastically independent partitions
		in a given probability space.

		\new{Second version, 2012-02-22. This version looks closer into the notion of continuity. All changes, with the exception of small typografic ones, to the previous version are shown in this colour.}
	\end{abstract}

	\title{Additive entropies of partitions}

		\address{
			Faculty of Mathematics and Computer Science\\
			University of Łódź\\
			ul. Banacha 22, 90-238 Łódź \\
			Poland}
		\author{Adam Paszkiewicz}
			\email[A.~Paszkiewicz]{ktpis@@math.uni.lodz.pl}
		\author{Tomasz Sobieszek}
			\email[T.~Sobieszek]{sobieszek@@math.uni.lodz.pl}
			\urladdr{http://sobieszek.co.cc}

	\thanks{This paper is partially supported by Grant nr N N201 605840.}

	\subjclass{Primary 94A17, Secondary 60A10}

	\keywords{additive entropy, inset entropy, partition entropy, axioms of entropy, independent $\sigma$-algebras}

\maketitle
	\section {Introduction}

\begin{ver:new}
The classical discrete theory of entropy concerns itself with real functions $H$ defined on the family of sequences
$(p_1,\ldots,p_n)$ such that $p_i \ge 0$ and $\tsum p_i = 1$. There, the most significant role of all entropies is played
by the \defem{Shannon entropy}, given by
$H(p_1,\ldots,p_n)= p_1\log\tfrac 1{p_1} + \ldots + p_n \log \tfrac 1{p_n}$. (In here, as throughout the
paper, we confine ourselves to base~$2$ logarithms, as dictated by information theory tradition). It is the only symmetric (i.e. independent of the order of $p_i$-s) continous function
of such sequences that is normalised by $H(1/2,1/2)=1$ and satisfies the following \defem{grouping axiom}\footnotemark
\begin{multline}
	H(\ a_1 p_1, \ldots, a_k p_1\ , \  b_1 p_2, \ldots, b_l p_2\ , \ \cdots\ , \ c_1 p_n, \ldots, c_m p_n\ ) =\notag\\	
	\begin{aligned}
	&H (p_1, p_2, \cdots, p_n) \\
	&+ p_1 H(a_1, \ldots, a_k) + p_2 H(b_1,\ldots,b_l) + \cdots + p_n H(c_1,\ldots,c_m).
	\end{aligned}
\end{multline}
\footnotetext{This result which is a little better than Shannon's own set of axioms (see~\cite{S}) is a version
of Fadeev's axioms of entropy, c.f. \cite{F}. The grouping axiom is often called \defem{strong-additivity}. In terms of conditional entropy of random variables it can be succinctly expressed as the so-called \defem{chain rule} $H(X,Y) = H(X) + H(Y|X)$, (see \cite{CT}, Theorem 2.2.1). The shape of the grouping axiom, can lead us to think about the differences of entropies between two partitions, one coarser than the other. Indeed, a simple, but a very neat reformulation of the grouping axiom in these terms as a certain linearity property can be found in~\cite{BFL}}
Arguably, the grouping axiom can be singled out as the most important and almost defining property of
Shannon entropy. Even so, a noticeable part of Information Theory has been played by either some modifications or weakened statements of the grouping axiom. Among them there is the important and much weaker
property, called \defem{additivity},
satisfied not just by Shannon entropy but also by the earlier concept called Hartley entropy (see \mref {Example} {exm:numbering} in Section~2, c.f. \cite{H}) and Rényi entropy  (see \mref {Example} {exm:R} in Section~2, c.f. \cite{R}):
\begin{equation}	\label{eqn:entropyadditive}
H(p\otimes q) = H(p) + H(q),
\end{equation}
where
\begin{align*}
	&p=(p_1,\ldots,p_n), q=(q_1,\ldots,q_m),\\
	&p\otimes q = (p_1q_1,\ldots, p_1q_m, p_2q_1,\ldots, p_2q_m, \ldots, p_n q_1, \ldots, p_n q_m).
\end{align*}
Every symmetric $H$ that satisfies this equation is called an \defem{additive entropy}.
Hartley and Rényi entropies proved themselves useful and applicable in many fields.\footnote{Rényi entropy,
for instance, is used in random search~\cite{R2}, coding theory~\cite{Ca2}, cryptography~\cite{BBCM}, and differential geometric aspects of statistics~\cite{Ca}.}
Additivity can be found in some axiomatisations of Shannon entropy, one notable
instance that also gives an axiomatisation of Hartley entropy is given in~\cite{AFN}.

There are quite a few other entropies of sequences and their properties. For their detailed
exposition, see~\cite{AD}. For a modern survey of characterisations of Shannon entropy (among other
things), see~\cite{Cs}.

Now, the natural setting in which entropies appear in most applications is not on a family of sequences
but on a space of events. Indeed, a finite partition can be regarded as a representation of the information carried by experimentally collected data (see e.g. \cite{A}). It seems therefore only fitting and indeed often necessary to introduce the concepts
of entropy that take events into account. This has lead to the development of 'mixed theory of information' by
Aczél, Daróczy and others. This theory is a research of various so-called \defem{inset entropies} and was introduced in a series of papers, beginning with~\cite{AD2}.

Consider a ring $\alg B$ of subsets of~$\Omega$. Before, we were considering a family of sequences of positive numbers
$(p_1,\ldots,p_n)$ that sum up to~$1$. Riding roughshod over certain details,\footnotemark\,  we now consider a family $\alg G$ of sequences of pairs $(A_i,p_i)$ such that $A_i$-s
are elements of $\alg B$ which make up a partition, and $p_i$-s, as previously, are
positive and add up to~$1$. On such a family we consider functions $I:\alg G \to \R$.
\footnotetext{There are actually several different flavours of inset entropy.
For instance, it is sometimes assumed that the ring $\alg B$ is an algebra (contains $\Omega$)
and we consider partitions of $\Omega$, at other times we consider partitions of all sets in $\alg B$.
We can consider positive $p_i$-s or just nonnegative. The sets can be nonempty or not neccessarily.}

Various conditions analogous to symmetry, strong-additivity, additivity, and so on lead on to different inset entropies~$I$. For instance we can
define \defem{additive inset entropy} by demanding a version of symmetry 
and a version of additivity
\begin{align*}
		I\begin{pmatrix}	A_i\cap B_j		\\ 
					p_i q_j	\end{pmatrix}
		= I\begin{pmatrix}	A_i		\\ 
					p_i	\end{pmatrix} 
		+ I\begin{pmatrix}	 B_j		\\ 
					q_j	\end{pmatrix}
\end{align*}

Inset entropies have recently found application in considerations involving utility function in gambling, see~\cite{NLM}.

We propose a related, yet different, approach.
\end{ver:new}
We are given a nonatomic probability space $(\Omega, \alg \Sigma, P)$.
\begin{ver:old}
Take any algebra~$\alg A$ generated by a partition of~$\Omega$ into disjoint events
$A_1, \dotsc, A_n \in \alg \Sigma$, $A_i \cap A_j = \emptyset$. By the \defem{Shannon information function}
of this algebra we shall mean a mapping~$L_{\alg A}$ defined by
\[
	L_{\alg A} (\omega) := \log \frac 1{P(A_i)},
		\qquad \text{when } \omega \in A_i.
\]

According to information theory tradition we confine ourselves to base~$2$ logarithms.

\end{ver:old}Let $\A$ be the family of all finite subalgebras of
$\sigma$--algebra~$\alg \Sigma$, or in other words algebras generated by finite partitions
$\Omega = A_1 \cup \dotsb \cup A_n$, $A_i \cap A_j = \emptyset$.
\begin{ver:old}
A significant part of the well-developed mathematical
information theory is played by the axiom systems of \defem{Shannon entropy}, that is
a function $L_s\colon \A \to \R_+$ given by
\[
	L_s(\alg A) := \ssum_{1 \le i \le n}
		P(A_i) \log \frac1{P(A_i)} = \expected L_{\alg A},
\]
where $\alg A$ is generated by an $\{A_1, \dotsc, A_n\}$ partition of $\Omega$;
and $\expected$ denotes the expeted value, here the integral
$\int_\Omega L_{\alg A}\, dP$.
These axiom sets were derived by Khinchin, Fadeev, Urbanik (c.f. \cite {Ch1},
\cite {Ch2}, \cite {F}, \cite {IU}).
A detailed description can be found in~\cite{GT}.

A reccurring thought of the axiom systems of entropy is to require that
\end{ver:old}
\begin{ver:new}
Consider the following version of additivity
\end{ver:new}
\begin{equation}		\label{eqn:intromain}
	I\left( \sigma (\alg A \cup \alg B) \right)= I(\alg A) + I(\alg B),
\end{equation}
for any independent finite algebras~$\alg A$ and~$\alg B$.

It is the aim of the current paper to examine the family of functions $I\colon \A \to \R$
satisfying \meqref {condition} {eqn:intromain}, which we shall call
\defem {additive partition entropies}.\footnote{The term 'partition entropy'  has been introduced earlier in the context of partitions of a finite set, see e.g. \cite{SJ}. Additive entropy of partitions has been considered in~\cite{BN}.} It turns out that this research leads to a simple and
effective description.
\begin{ver:old}

	These considerations may be motivated by the fact that in many applications a
	finite partition of $\Omega$ is regarded as
	a representation of the information carried by experimentally collected data (see e.g. \cite{A}).
	Thus investigating more general concepts of information than
	$L_s$ seems to be well grounded.

\end{ver:old}

Of course, every additive entropy $H$ can be viewed as an additive partition entropy
$H_P$, $(H_P) (\sigma(A_1,\ldots,A_n)) = H(P(A_1), \ldots,P(A_n))$ that depends only on probabilities of atoms.

But there are other additive partition entropies. In fact, given $x\in \Omega$ consider the function
$L_x:\A \to \R$, 
\[
	L_x(\alg A) = \log \tfrac 1 {P(A_x)},\quad\text{where $A_x$ is the atom of $\alg A$ that contains $x$}
\]
Naturally, this function satisfies \meqref {equation} {eqn:intromain}.

Here is the main result of the paper:
``Let $I$ be a continuous additive partition entropy. There exist a continuous entropy $H$, and a
countably-additive set function~$\m$ absolutely continuous
with respect to probability~$P$ such that $I$ is a sum of two continous additive partition entropies
\begin{align*}
	I &= H_P \ + \ \int_\Omega \!\!\! L_x (\cdot) \m(dx),	\quad\text{that is}\\
	I(\alg A) &= H(P(A_1), \ldots,P(A_n)) + \sum_{i=1}^{n} \m(A_i)\log\tfrac 1{P(A_i)}
\end{align*}
%and the integral means
%$\int_\Omega L_x \m(dx) (\alg A) :=\int_\Omega L_x(\alg A) \m(dx)$, for some
Moreover we can assume that $\m(\Omega) = 0$, in this case such a decomposition is unique.''

Let us mention that a somewhat related result, however with much stronger assumptions, and in the setting of additive inset entropy has already appeared in~\cite{E}, see also~\cite{KS}.

The above result shall be presented as a corollary in~Section~5, \mref {Theorem}
{thm:probmain}. Sections 2--4 contain a proof of a slightly more general \mref{Theorem} {thm:cts};
namely it is natural to replace the probability~$P$ with a somewhat more general
notion of a finitely-additive measure~$\mu$. A follow up paper~\cite{So} concerns a similar result in which there are no continuity assumptions.

The proof of Theorem~1 is longer than might be expected.
The major difficulty lies in the construction of component $\int_\Omega \! L_x (\cdot) \m(dx)$, i.e.
 in finding the measure $\m$. The construction, is made of two
parts — one is probabilistic in that it depends heavily on the notion of independence (point I below), the other  involves algebraic manipulation of measures (points II and III below).

Theorem~1 is a corollary to the following results:
\begin{enumerate}[I]
	\item (cf. \mref{Propositions} {prn:key} and \ref {prn:delta}, Section 3)
	If $I$ is a continuous (see \mref {definition} {dfn:abscont} below) additive partition entropy
	then for any events $V$, $W$ with $P(V)=P(W)$ we can define the number
	$\Delta (V,W)$ in such a way that the following conditions are satisfied
	\begin{enumerate}
		\item
			Whenever there is a partition $(A_1,\ldots,A_n)$ with $V\subset A_1$,
			$W\subset A_2$, $P(A_2)/P(A_1)=\lambda$ we have
			\begin{align*}
				\Delta(V,W) \log\lambda &=
				I(A_1 \symdiff (V\cup W), A_2\symdiff (V\cup W),\ldots, A_n)
				\\&\quad - I(A_1,A_2,\ldots, A_n)
			\end{align*}
		\item
			If $P(U)=P(V)=P(W)$ then
			\[
				\Delta(U,W) = \Delta(U,V) + \Delta(V,W).
			\]
		\item
			For $V = V_1 \cup V_2$, $W = W_1 \cup W_2$,
			$V_1\cap V_2=\emptyset$, $W_1\cap W_2=\emptyset$,
			$P(V_i) = P(W_i)$, $i = 1,2$ we have
			\[
				\Delta(V,W) = \Delta(V_1,W_1) + \Delta(V_2,W_2).
			\]
		\item
			$\Delta(\cdot, \cdot)$ is continuous in metric $d(V,W)=P(V\symdiff W)$.
	\end{enumerate}
	\item (cf. \mref{Lemma} {lem:keycts}, Section 3)
		If $\Delta(V,W)$ defined for $P(V)=P(W)$ satisfies (b)–(d) then there is a
		unique finitely-additive measure $\m$, absolutely continous with respect to $P$
		such that $\m(\Omega)=0$ and
		\[
			\Delta(V,W)=\m(W)-\m(V).
		\]
	\item (cf. proof of Theorem 1, Section 4)
	Assume that for some continous additive partition entropy $I$, and
	measure $\m \ll P$ we have
		\[\begin{split}
			[\m(W)-\m(V)]\log\lambda&=
				I(A_1 \symdiff (V\cup W), A_2\symdiff (V\cup W),\ldots, A_n)\\
				&\quad - I(A_1,A_2,\ldots, A_n),
		\end{split}\]
	for $P(V)=P(W)$, and a partition $(A_1,\ldots,A_n)$ with $V\subset A_1$,
	$W\subset A_2$, $P(A_2)/P(A_1)=\lambda$. It turns out that
	$\widetilde I (\alg A):= I(\alg A) -\sum_{i=1}^{n} \m(A_i)\log\tfrac 1{P(A_i)}(\alg A)$ is a continous additive partition entropy
	such that $\widetilde I (\alg A)  = \widetilde I (\alg B)$ for any $\alg A$ and $\alg B$
	generated by $A_1, \dotsc, A_n$ and $B_1, \dotsc, B_n$ such that
	$P(A_i) = P(B_i)$, $1 \le i \le n$.
\end{enumerate}

%In Section~6 we shall state some implications of \mref {theorem}
%{thm:probmain} obtained by A.~Paszkiewicz
%and list several open problems that he has put forward.

All propositions say something about either continuous or general additive partition entropies
and they all follow easily from either continous-case \mref {Theorem} {thm:cts} or an analogical general result in~\cite{So}. The lemmas, however, are
of an `independent' nature and some of them, especially Lemmas 7–9, might find themselves
useful somewhere else.
\begin{ver:old}
Additive partition entropies $L_\m$ can find application in coding theory.
It turns out that the value $L_\m(\alg A)$ is often close to the minimal price
of sending a coded message about $\alg A$, when ${d\m \over dP} = g$, and
$g$ is a nonnegative bounded function that expresses the price of sending a bit
of information.
The details are to be expected in the upcoming paper \cite {P}.
\end{ver:old}

	\section {Basic notions and notation}

We shall denote the characteristic function of a set~$A$ by~$\charf{A}$.
We shall write $A = \bigdissum A_i$ or $A = A_1 \dissum\dotsb\dissum A_n\,$,
whenever we have $A=\bigcup_{1\le i \le n} A_i$ and the sets $A_i$ are pairwise disjoint.

Fix a space   $(\Omega, \alg{F}, \mu)$with a finitely-additive probability measure~$\mu$
defined on some algebra~$\alg F$ of subsets of~$\Omega$. We shall consider the family
$\A = \A( \alg F )$ of all finite subalgebras of the algebra~$\alg F$.
From now on, we shall hold on to the following assumption regarding measure~$\mu$
		\begin{darboux}		\label{rmk:assumption}
	For any set $A \in \alg F$ and any $0 < \theta < \mu(A)$ there is
	$B \in \alg F$ such that $B \subset A$ and $\mu(B) = \theta$.
	\end{darboux}

Let us remark that in case  $(\Omega, \alg F, \mu)$ is a usual probability space, the Darboux
property is satisfied if and only if the space is nonatomic.

Now, going back to the finitely-additive case, we have a naturally-defined notion of an integral of an
$\R \cup \{ +\infty \}$-valued simple function. We shall assume that $0 \tim (+\infty) = 0$.

For any $K\in \alg F$ such that $\mu(K)>0$, we shall consider a truncated conditional
probability space $(K, \alg F\restr{K}, \mu\restr{K})$. This means that
$A \in \alg F\restr{K}$  is equivalent to $A \in \alg F \wedge A \subset K$ and that
\[
	\mu\restr{K}(A) = \frac{\mu(A)}{\mu(K)}.
\]
Undeniably, any such ``subspace'' satisfies our
\hyperref[rmk:assumption]{Darboux Property}.

By $\A\restr K$ we shall understand $\A( \alg F\restr K )$, i.e. the
family of all finite subalgebras of~$\alg F\restr K$. For any
$\alg{A} \in \A$, by $\alg{A}\restr K \in \A\restr K$ we shall denote
$\{ A \cap K  \colon  A \in \alg{A} \}$.
For every finite family $\alg G \subset \alg F$, by
$\genalg(\alg G) \in \A$ we shall understand the algebra generated by~$\alg G$,
that is the smallest algebra containing~$\alg G$.

In the sequel we shell often refer to algebras generated by partitions. This is
why the following special notation might prove convenient. We shall write
\[
	\alg A = \indalg{A_1, \dotsc, A_n},
\]
if $\alg A = \genalg( \{A_1, \ldots, A_n\})$ and
$A_1 \dissum \dotsb \dissum A_n = \Omega$,
$A_i \neq \emptyset$, $A_i \in \alg F$.

Clearly, every algebra $\alg A \in \A$ is of shape
$\indalg{A_1, \dotsc, A_n}$. In fact, we have
\[
	\{ A_i : 1 \le i \le n \} = \min_\subset( \alg{A} )
\]
where $\min_\subset( \alg A )$ denotes the set of minimal elements of
$\alg{A} \setminus \{\emptyset\}$ partially ordered by the relation~$\subset$
(the set of \defem {atoms} of algebra~$\alg A$).

Notice that, for $\alg{A} = \indalg{A_i}$
and~$\alg{B} = \indalg{B_j}$ we have
$\genalg( \alg{A} \cup \alg{B} ) = \indalg{ C_1, \dotsc, C_s }$, where
$\{ C_1, \dotsc, C_s \} := \{ A_i \cap B_j : A_i \cap B_j \neq \emptyset,
1 \le i \le n, 1 \le j \le m \}$.

For any $\alg{A}, \alg{B} \in \A$ we shall write  $\alg{A} \indep \alg{B}$
whenever these algebras are independent, that is when $\mu(A\cap B) = \mu(A)\mu(B)$
for all $A \in \alg A$, $B \in \alg B$. One reason for using this symbol is given by the following
observation: ``Algebras $\alg A$ and~$\alg B$ are independent
if and only if $\int fg\,d\mu = 0$ for any simple functions $f$ and~$g$, which are
measurable with respect to the corresponding algebras $\alg A$ and~$\alg B$,
and which have their expected values equal to $0$, $\int f\,d\mu = \int g\,d\mu = 0$''.
Under the reprepresentations
$\alg{A} = \indalg{ A_1, \dotsc, A_n }$ and $\alg{B} = \indalg{ B_1,\dotsc, B_m}$,
the independence $\alg{A} \indep \alg{B}$ is equivalent to having
$\mu( A_i \cap B_j ) = \mu(A_i) \mu(B_j)$ for all $i,j$.
We shall also write
\[
	\alg{C} = \alg{A} \tim \alg{B},
\]
if $\alg{A} \indep \alg{B}$ and $\alg{C} = \genalg( \alg{A} \cup \alg{B} )$.

This paper is devoted to exploring functions of the following kind:
		\begin{dfn}
	The mapping $I\colon \A \to \R$ is called \defem{an additive partition entropy}
	if it satisfies
	\[
		I( \alg{A} \tim \alg{B} ) = I( \alg{A} ) + I( \alg{B} ),
	\]
	that is if for any $\alg A$, $\alg B \in \A$ such that
	$\alg A \indep \alg B$ we get
	$I( \genalg(\alg A \cup \alg B) ) =  I(\alg{A}) + I(\alg{B})$. \end{dfn}

Let us state several exampes that will play a role in the general description of
additive partition entropies. Before we do that, however, consider the following
crucial function $L:\A \to \R^\Omega$, which to a given algebra
$\alg A=\indalg{A_1,\dotsc,A_n}\in \A$ assigns a simple function
\[
	L(\alg A):=\ssum_{1\le i\le n}
	\:\left(\!  \log \frac 1{\mu(A_i)}  \!\right) \charf{A_i}.
\]
Evidently, the operator~$L$ solves our equation
$L( \alg{A} \tim \alg{B} ) = L( \alg{A} ) + L( \alg{B} )$. This leads to the followng
		\begin{exm}		\label{exm:H}
	Take an arbitrary finitely-additive measure~$\m\colon\alg F \to \R$,
	vanishing on sets of $\mu$-measure zero. We obtain an
	additive partition entropy
	\[
		L_\m(\alg A)  :=  \int\!  L(\alg A)  \,d\m  = 
		\ssum_{1\le i\le n}  \m(A_i) \log \frac 1{\mu(A_i)}.
	\] \end{exm}

A special case $\m = \mu$, gives the Shannon entropy of~$\alg A$
\[
	L_\mu(\alg A) = \expected L(\alg A) = %H_\mathrm{Shannon} (\alg A).
									\ssum_{1\le i\le n}  \mu(A_i) \log \frac 1{\mu(A_i)}.
\]
By taking variance instead of expectation, we arrive at

		\begin{exm}		\label{exm:V}
	For any algebra $\alg A \in \A$ set
	\[
		V( \alg A ) := \variance \bigl[ L( \alg A ) \! \bigr] =
		\int \! \left[ L(\alg A) - \expected L(\alg A) \right]^2 d\mu.
	\]
	Since $V(\alg A)$ is the variance of the random variable~$L(\alg A)$
	with respect to probability~$\mu$, and since for any independent
	$\alg A, \alg B \in \A$ the random variable
	$L( \alg A\tim \alg B ) = L(\alg A) + L(\alg B)$ is the sum of
	independent random variables $L(\alg A)$ and~$L(\alg B)$, it follows that
	the mapping $V$ is an additive partition entropy. %In fact this is the Rényi entropy of order~2. NOT TRUE
	\end{exm}

	Shannon entropy and \mref{Example}{exm:V} could be generalised further to any cumulant.
	In fact, consider the cumulant generating function (cgf) of $L$
	\[
		t\mapsto \log \big(\expected  \exp (tL)\big) = \log \big(\ssum_{1\le i \le n}\mu(A_i)^{1-t}\big).
	\]
	Recall that the cgf of a sum of independent random variables is equal to the sum of cgf's of the
	respective variables. Therefore, any 'linear functional on the cgf' is an example of an
	additive partition entropy. This can be the cumulants i.e. the coefficients of the power series
	representation of cgf of $L$. It is also convenient to consider the values of cgf
	(say at $t=1-\alpha$), up to a constant factor they are the

		\begin{exm}		\label{exm:R}
	\defem{Rényi entropies} of order $\alpha	\neq 1$, namely
	\[
		R_\alpha(\alg A) : = 
		\tfrac 1{1-\alpha} \log \big(\ssum_{1\le i \le n}\mu(A_i)^\alpha\big).
	\]
	As explained above, or seen directly, these are additive partition entropies. In fact, this is the most widely
	known class of examples of an additive entropy. (These entropies were introduced in~\cite{R}.)
	\end{exm}

The  role of the leading coefficient $1/(1-\alpha)$ is to ensure that the Rényi entropy tend to Shannon entropy as $\alpha$ tends to $1$, (this follows, for instance, from a simple use of de l'Hospitals rule, see also~\cite{AD}). The case $\alpha = 0$ is somewhat special, too. In this case
	
		\begin{exm}		\label{exm:numbering}
	$R_0$ takes the shape
	\[
		\alg A \mapsto \log\,\#\{A_i:\mu(A_i)>0\},
	\]
	and is known as \defem{Hartley entropy}.(It was defined in~\cite{H}.) \end{exm}

\new{Our final pair of examples have an important use  in the research of extreme cases
	of complexity theory, (see \cite {TWW}).
		\begin{exm}		\label{exm:minmax}
	They are the minimum, and the maximum of the simple function $L(\alg A)$:
	\begin{align*}
		L_\mathrm{min}(\indalg{A_1, \dotsc, A_n})
		&:= \mathrm{min}_{\substack{1\le i\le n \\
				\mu(A_i)\neq 0}} \log \tfrac 1{\mu(A_i)},\\
		L_\mathrm{max}(\indalg{A_1, \dotsc, A_n})
		&:= \mathrm{max}_{\substack{1\le i\le n \\
				\mu(A_i)\neq 0}}  \log \tfrac 1{\mu(A_i)}.
	\end{align*}	\end{exm}}

		\begin{dfn}
	We say that algebras $\alg A$ and $\alg B$ \defem{have the same
	measures of atoms}, if we have 
	\[
		\alg A = \indalg{A_1,\dotsc,A_n}, \qquad
		\alg B = \indalg{B_1,\dotsc,B_n}
	\]
	with $\mu (A_i) = \mu (B_i)$, for $1 \le i \le n$. Also,
	an additive partition entropy~$I$ will be said to
	\defem{depend solely on the measures of atoms} if $I$ is
	constant on each family of algebras that have the same measures of atoms.\end{dfn}
	
This is the same as saying that there is a 'classical' additive entropy~$H$, such that
$I=H_P$, i.e. $I(\alg A) = H(\mu(A_1), \ldots, \mu(A_n))$.

		\begin{rmk}		\label{rmk:type}
	The additive partition entropies of~\mref {Examples} {exm:V}--\ref {exm:minmax}
	depend solely on the measures of atoms. In cases when $\m$ is absolutely continuous
	with respect to~$\mu$, (see \mvref {Definition} {dfn:abscont})
	the additive partition entropy~$L_\m$ from \mref {Example} {exm:H}
	depends solely on the measures of atoms exactly when $\m = \alpha \mu$, for
	some~$\alpha \in \R$, \new{i.e. when it is a multiple of Shannon's entropy.}	\end{rmk}

Indeed, if $L_\m$ depends solely on the measures of atoms then
there is a function $f \colon \R \to \R$ such that for any $A \in \alg F$ we have
$\m(A) = f(\mu(A))$. (In fact, if
$\alg A = \indalg{A, \Omega \setminus A}$,
$\alg B = \indalg{B, \Omega \setminus B}$,
$\mu(A) = \mu(B) =: a \neq 1/2$ then
we put $x:=\m(A)$, $y:=\m(B)$ and get
\begin{equation}		\label{eqn:mcheck}
\begin{aligned}
	L_\m(\alg A) &= \left( \log \frac 1a \right) x +
				\left( \log \frac 1{1-a} \right) (1-x) =\\
	L_\m(\alg B) &= \left( \log \frac 1a \right) y +
				\left( \log \frac 1{1-a} \right) (1-y);
\end{aligned}
\end{equation}
from which $x=y$. If $a= 1/2$ break $A$ and $B$ into two pieces of equal measures.)
The function~$f$ must be additive and continuous at zero, thus linear.

Incidentally, observe that the mapping~$\m \mapsto L_\m$ is injective. This
follows in nearly the same way as equations~\eqref {eqn:mcheck} above.

	\section {Some introductory statements}

From now on, when speaking of an \defem{algebra} or
writing $\alg{A}$, $\alg{B}$,  etc. we shall mean an algebra in~$\A$, or in other words an
algebra generated by a partition of $\Omega$ into a finite number of
measurable sets.

For any algebras $\alg A$, $\alg B$, and $\alg K = \indalg{K_1,\dotsc,K_n}$ we shall write
$\alg A \indep_{\alg K} \alg B$ if for every $K_i$ such that $\mu(K_i) > 0$ the algebra
$\alg A \restr {K_i}$ is independent with $\alg B \restr {K_i}$.
		\begin{lem}\label{lem:separation}
	Let $\alg K = \indalg{K_1,\dotsc,K_n} \subset \alg A$ and $\mu(K_i) >0$, $1\le i \le n$.
	Then the following conditions are equivalent
	\begin{enumerate}
		\item $\alg A \indep \alg B$,
		\item  $\alg B \indep \alg K, \alg A \indep_{\alg K} \alg B$.
	\end{enumerate}
	If these conditions are satisfied then
	\[
		\alg C = \alg A \tim \alg B\quad\text{is equivalent to}\quad
		K_i \in \alg C,\ \alg C \restr {K_i} = \alg A \restr {K_i} \tim \alg B \restr {K_i},\ 1 \le i \le n.
	\] \end{lem}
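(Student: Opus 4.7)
The plan is to split the proof into two parts: first, the equivalence (a)$\Leftrightarrow$(b), and second, the characterization of $\alg C = \alg A \tim \alg B$ under these hypotheses. The whole lemma ultimately reduces to the observation that, since $\alg K \subset \alg A$, every atom $A_j$ of $\alg A$ is contained in a unique $K_{i(j)}$, which allows one to freely convert between joint, conditional and marginal probabilities on each $K_i$.

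Writing $\alg A = \indalg{A_1,\dotsc,A_s}$ and $\alg B = \indalg{B_1,\dotsc,B_m}$, the direction (a)$\Rightarrow$(b) is immediate: since $K_i \in \alg A$, the independence of $\alg A$ with $\alg B$ gives $\mu(K_i \cap B_k) = \mu(K_i)\mu(B_k)$, i.e. $\alg B \indep \alg K$; and for $A_j \subset K_i$ we divide $\mu(A_j \cap B_k) = \mu(A_j)\mu(B_k)$ by $\mu(K_i)$ to recover $\mu\restr{K_i}(A_j \cap B_k) = \mu\restr{K_i}(A_j)\,\mu\restr{K_i}(B_k \cap K_i)$, using that $\mu\restr{K_i}(B_k \cap K_i) = \mu(B_k)$ by $\alg B \indep \alg K$. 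For (b)$\Rightarrow$(a), I reverse the computation: using $A_j \cap B_k = A_j \cap B_k \cap K_{i(j)}$ and multiplying the conditional-independence identity by $\mu(K_{i(j)})$ gives $\mu(A_j \cap B_k) = \mu(A_j)\mu(B_k \cap K_{i(j)})/\mu(K_{i(j)})$, and then $\alg B \indep \alg K$ collapses this to $\mu(A_j)\mu(B_k)$.

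For the second part, assume (a) and (b). If $\alg C = \alg A \tim \alg B = \sigma(\alg A \cup \alg B)$, then $K_i \in \alg K \subset \alg A \subset \alg C$, and since restriction to a fixed set commutes with generating the smallest algebra, $\alg C \restr{K_i} = \sigma(\alg A \restr{K_i} \cup \alg B \restr{K_i})$; combined with $\alg A \restr{K_i} \indep \alg B \restr{K_i}$ from (b), this yields $\alg C \restr{K_i} = \alg A \restr{K_i} \tim \alg B \restr{K_i}$. Conversely, if $K_i \in \alg C$ and $\alg C \restr{K_i} = \alg A \restr{K_i} \tim \alg B \restr{K_i}$ for each $i$, then any $A \in \alg A$ decomposes as $A = \bigcup_i (A \cap K_i)$ with each piece in $\alg A \restr{K_i} \subset \alg C \restr{K_i} \subset \alg C$ (the last inclusion valid because $K_i \in \alg C$), so $\alg A \subset \alg C$; likewise $\alg B \subset \alg C$, giving $\sigma(\alg A \cup \alg B) \subset \alg C$, and the reverse inclusion follows by the same piecewise decomposition applied to $C \in \alg C$.

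I do not expect any conceptual obstacle; the entire argument is a matter of bookkeeping. The only place requiring care is keeping the conditional measure $\mu\restr{K_i}$ cleanly separated from $\mu$, in particular noting that $\mu\restr{K_i}(B_k \cap K_i) = \mu(B_k \cap K_i)/\mu(K_i)$ equals $\mu(B_k)$ \emph{precisely} when $\alg B \indep \alg K$ — which is exactly the reason the statement needs both halves of condition (b), not just the conditional independence.
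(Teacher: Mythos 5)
Your proof is correct and follows essentially the same route as the paper: the equivalence of (1) and (2) by direct manipulation of joint, marginal and conditional probabilities on atoms (which the current version of the paper leaves to the reader), and the characterization of $\alg C = \alg A \tim \alg B$ by showing that, once $\alg K \subset \alg C$, the algebra $\alg C$ is determined by its restrictions $\alg C \restr{K_i}$ — the paper phrases this via the decomposition of $\min_\subset(\alg C)$ into the atoms of the $\alg C\restr{K_i}$, while you phrase it as piecewise inclusion bookkeeping, which amounts to the same thing.
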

		\begin{proof}
\begin{ver:old}
	If $\alg A \indep \alg B$ then $\alg B \indep \alg K$ and for any $A \in \alg A$, $B \in \alg B$
	we have $\mu(A \cap K_i) \, \mu(B) = \mu(A \cap B \cap K_i)$. This gives
	\[
		\frac {\mu(A \cap B \cap K_i)} {\mu(K_i)} =
		\frac {\mu(A \cap K_i) \, \mu(B)} {\mu(K_i)} =
		\frac {\mu(A \cap K_i)}{\mu(K_i)} \tim
					\frac {\mu(B \cap K_i)} {\mu(K_i)}.
	\]	
	On the other hand, whenever $A \in \alg A$, $B \in \alg B$ condition 2 gives
	\[
		\mu( A\cap B \cap K_i ) =
		\frac { \mu(A\cap K_i) \, \mu(B \cap K_i) } {\mu(K_i)} 	
		 = \mu(A \cap K_i) \, \mu(B)
							\qquad (1 \le i \le n).
	\]
	Then
	\[\begin{split}
		\mu(A \cap B) = \sum_i \mu(A \cap B \cap K_i)
		= \sum_i \mu(A \cap K_i) \, \mu(B) = \mu(A) \, \mu(B).
	\end{split}\]
\end{ver:old}
\begin{ver:new}
	We leave to the reader the proof that $1.$ and~$2.$ are equivalent.

\end{ver:new}
	To show the second equivalence we need the following for $\alg K \subset \alg C$
	\begin{equation}		\label{eqn:tmp1}
		\alg C = \genalg(\alg A \cup \alg B) \iff
		\alg C\restr{K_i} = \genalg( \alg A\restr{K_i} \cup
				\alg B\restr{K_i} ) \qquad (1 \le i \le n).
	\end{equation}
	Since $\alg K \subset \alg C$ we have
	\begin{equation}\label{eqn:tmp2}
		\min_\subset( \alg C ) = \bigdissum_i
					\min_\subset( \alg C\restr{K_i} ).
	\end{equation}
	Similarly from $\alg K \subset \genalg(\alg A \cup \alg B)$ we obtain
	\begin{equation}		\label{eqn:tmp3}
		\min_\subset( \genalg(\alg A \cup \alg B) )
		= \bigdissum_i \min_\subset( \genalg(\alg A \cup \alg B)
								\restr{K_i} )
		= \bigdissum_i \min_\subset( \genalg(\alg A\restr{K_i} \cup
							\alg B\restr{K_i}) ).
	\end{equation}

	From \eqref{eqn:tmp2} and \eqref{eqn:tmp3}
	we obtain \eqref{eqn:tmp1}. \end{proof}

		\begin{lem}		\label{lem:anytype}
	For any algebra $\alg A$ and any numbers $c_1, \dotsc, c_k \ge 0$ such
	that $\sum_{1 \le j \le k} c_j = 1$ there is an algebra
	$\alg C = \indalg{C_1, \dotsc, C_k}$ satisfying
	$\alg C \indep \alg A$, $\mu(C_i) = c_i$. \end{lem}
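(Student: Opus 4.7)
The plan is to construct the $C_i$ by refining $\alg A$ atom by atom, distributing mass proportionally within each atom so that the independence condition $\mu(C_i \cap A_j) = \mu(C_i)\mu(A_j)$ falls out automatically.

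More concretely, I would write $\alg A = \indalg{A_1,\dotsc,A_n}$. For each index~$j$ with $\mu(A_j)>0$, the conditional space $(A_j, \alg F\restr{A_j}, \mu\restr{A_j})$ again satisfies the Darboux property, so I can iteratively peel off disjoint subsets $C_{1,j}, C_{2,j}, \dotsc, C_{k,j}$ of $A_j$ with $\mu(C_{i,j}) = c_i \mu(A_j)$ and $A_j = \bigdissum_i C_{i,j}$; at each stage the Darboux property is applied to the remainder, which still has the correct residual measure. For atoms with $\mu(A_j) = 0$ the construction is free of measure constraints, and one just assigns $A_j$ as a whole (or trivially partitioned) to the $C_i$'s. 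Finally set $C_i := \bigdissum_{1 \le j \le n} C_{i,j}$.

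Two verifications remain, both immediate: first, $\mu(C_i) = \sum_j \mu(C_{i,j}) = \sum_j c_i \mu(A_j) = c_i$; second, for every $j$ we have $\mu(C_i \cap A_j) = \mu(C_{i,j}) = c_i \mu(A_j) = \mu(C_i)\mu(A_j)$, and this is exactly the atom-wise criterion for independence $\alg C \indep \alg A$ recalled in the paragraph preceding the lemma.

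The only genuinely delicate point is a bookkeeping one: the notation $\indalg{C_1,\dotsc,C_k}$ requires $C_i \neq \emptyset$, so if some $c_i = 0$ one needs a nonempty null piece, obtained by splitting off a vanishing sliver of a single atom (or, if no such sliver is available, by simply omitting those indices and stating the conclusion for the nontrivial $c_i$'s). Everything else is routine application of the Darboux property, so I do not anticipate any real obstacle beyond this edge case.
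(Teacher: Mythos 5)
Your proposal is correct and is precisely the argument the paper intends: its proof of this lemma is the single line ``This follows easily from the Darboux property,'' and the proportional subdivision of each atom $A_j$ into pieces $C_{i,j}$ with $\mu(C_{i,j})=c_i\mu(A_j)$ is the standard way to carry that out, with the atom-wise product condition $\mu(C_i\cap A_j)=\mu(C_i)\mu(A_j)$ giving independence exactly as you say. Your remark about empty $C_i$ when some $c_i=0$ is a legitimate edge case that the paper itself glosses over, and your handling of it is acceptable.
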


		\begin{proof}
	This follows easily from the
	\hyperref[rmk:assumption]{Darboux property}.\end{proof}

		\begin{crr}		\label{crr:indep}
	For any algebras $\alg A_1, \dotsc, \alg A_n$ and any nonnegative numbers
	$c_1, \dotsc, c_k$ such that $\sum_{1 \le j \le k} c_j = 1$
	there is an algebra $\alg C = \indalg{C_1, \dotsc, C_k}$ satisfying
	$\alg C \indep \alg A_i$, $1 \le i \le n$, $\mu(C_i) = c_i$. \end{crr}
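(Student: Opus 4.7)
The corollary is essentially a packaging of \mref{Lemma}{lem:anytype} applied to the join of the given algebras. The plan is as follows.

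First I would form the single algebra $\alg A := \genalg( \alg A_1 \cup \dotsb \cup \alg A_n )$, which belongs to $\A$ because the union of finitely many finite families of sets from $\alg F$ is itself finite, so the generated algebra is a finite subalgebra of~$\alg F$. This reduces the many-algebra statement to the single-algebra statement.

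Next I would apply \mref{Lemma}{lem:anytype} to $\alg A$ and the prescribed weights $c_1, \dotsc, c_k$ to obtain an algebra $\alg C = \indalg{C_1,\dotsc,C_k}$ with $\mu(C_j) = c_j$ and $\alg C \indep \alg A$.

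It then remains to deduce $\alg C \indep \alg A_i$ for each $i$. This is immediate from the definition of independence: since $\alg A_i \subset \alg A$, for every $A \in \alg A_i$ and every $C \in \alg C$ we have $A \in \alg A$, so $\mu(A \cap C) = \mu(A)\mu(C)$.

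There is no real obstacle here — the argument is essentially a one-liner once one observes that independence with a larger algebra is inherited by all of its subalgebras, and that $\genalg(\alg A_1 \cup \dotsb \cup \alg A_n)$ is still a finite algebra and therefore a legitimate input to \mref{Lemma}{lem:anytype}.
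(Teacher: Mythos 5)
Your proposal is correct and is exactly the paper's argument: the paper's proof consists of the single line ``Consider $\alg A = \genalg(\alg A_1,\ldots,\alg A_n)$'', leaving implicit the application of \mref{Lemma}{lem:anytype} and the fact that independence is inherited by subalgebras, both of which you spell out. Nothing is missing.
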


		\begin{proof}
	Consider $\alg A = \genalg(\alg A_1, \ldots, \alg A_n)$.\end{proof}

		\begin{rmk}		\label{rmk:indep}
	We will often use this Collorary in the following way.
	For any algebras $\alg A$ and $\alg B$ having the same measures of atoms
	there is an algebra $\alg C$ with the same measures of atoms such that
	\[
		\alg{C} \indep \alg{A}, \quad \alg{C} \indep \alg{B}.
	\] \end{rmk}

Fix any additive partition entropy~$I$.

		\begin{prn}		\label{prn:zero}
	If for some set~$Z$, $\mu(Z) = 0$ we have
	\[
		\alg A \restr {\Omega \setminus Z} =
		\alg B \restr {\Omega \setminus Z},
	\]
	then
	\[
		I(\alg{A}) = I(\alg{B}).
	\] \end{prn}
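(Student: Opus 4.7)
The plan is to route the argument through the common refinement $\alg C := \genalg(\alg A \cup \alg B)$, establishing separately that $I(\alg C) = I(\alg A)$ and $I(\alg C) = I(\alg B)$. The whole thing rests on one small observation, which I would state first: for any $N \in \alg F$ with $0 < \mu(N) = 0$ and $N \neq \Omega$, the two-atom algebra $\alg M := \indalg{N, \Omega \setminus N}$ is independent of every algebra in $\A$. Indeed, for any $E \in \alg F$ one has immediately $\mu(E \cap N) = 0 = \mu(E)\mu(N)$ and $\mu(E \cap (\Omega \setminus N)) = \mu(E) = \mu(E)\mu(\Omega \setminus N)$. In particular $\alg M$ is self-independent with $\alg M \tim \alg M = \alg M$, so additivity of $I$ forces $I(\alg M) = 2\,I(\alg M)$, whence $I(\alg M) = 0$.

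Next, I would extract what might be called a null-split lemma: if $\alg A = \indalg{A_1, A_2, \ldots, A_n}$ and $A_1 = A_1^a \dissum A_1^b$ with both pieces nonempty and $\mu(A_1^b) = 0$, then
\[
    I(\alg A) = I(\indalg{A_1^a, A_1^b, A_2, \ldots, A_n}).
\]
To see this I would set $\alg M := \indalg{A_1^b, \Omega \setminus A_1^b}$; by the observation above, $\alg M \indep \alg A$ and $I(\alg M) = 0$. A short check of pairwise intersections — noting that $A_i \cap A_1^b = \emptyset$ for $i \ge 2$ and that $A_1 \cap (\Omega \setminus A_1^b) = A_1^a$, $A_1 \cap A_1^b = A_1^b$ — shows that $\genalg(\alg A \cup \alg M)$ is exactly $\indalg{A_1^a, A_1^b, A_2, \ldots, A_n}$. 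Additivity of $I$ then closes the step.

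Finally, I would deploy the hypothesis $\alg A \restr{\Omega \setminus Z} = \alg B \restr{\Omega \setminus Z}$ to describe the shape of $\alg C$. For every atom $A_i$ of $\alg A$ that is not contained in $Z$, there is a unique atom $B_{j(i)}$ of $\alg B$ satisfying $A_i \setminus Z = B_{j(i)} \setminus Z$, and then $A_i \cap B_j \subset Z$ for every other $j$; the case $A_i \subset Z$ is even easier. Hence $\alg C$ partitions each $A_i$ into one piece of measure $\mu(A_i)$ together with a finite collection of null pieces. Iterating the null-split lemma finitely many times gives $I(\alg A) = I(\alg C)$, and by the symmetric argument $I(\alg B) = I(\alg C)$, yielding the claim. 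The only delicate point is this last bookkeeping — recognising that each non-principal $A_i \cap B_j$ is in fact null — but it falls out directly from $A_i \setminus Z = B_{j(i)} \setminus Z$, so I expect no real obstruction.
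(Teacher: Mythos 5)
Your proof is correct, and it rests on the same core observation as the paper's: an algebra generated by sets of $\mu$\dash measure zero is independent of every algebra in $\A$, so additivity lets one modify algebras on null sets for free. The execution differs, though. The paper picks representations $\alg A = \indalg{A_1,\dotsc,A_n,\dotsc,A_s}$, $\alg B = \indalg{B_1,\dotsc,B_n,\dotsc,B_t}$ with $\mu(A_i\symdiff B_i)=0$ and the trailing atoms null, forms the single auxiliary algebra $\alg C = \genalg(\{A_i\symdiff B_i, A_{n+k}, B_{n+l}\})$, observes $\alg A\tim\alg C = \alg B\tim\alg C$, and cancels $I(\alg C)$ from both sides of the additivity identity --- it never needs to know the value of $I(\alg C)$. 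You instead route through the common refinement $\genalg(\alg A\cup\alg B)$ and descend to it from each of $\alg A$ and $\alg B$ by an iterated null-split step, for which you do need $I(\indalg{N,\Omega\setminus N})=0$; your derivation of that via self-independence ($\alg M\indep\alg M$, hence $I(\alg M)=2I(\alg M)$) is a clean extra ingredient. Your version costs a finite induction and some bookkeeping on the atoms of the refinement (which you handle correctly: each $A_i\cap B_j$ with $j\neq j(i)$ sits inside $Z$), but it buys a slightly more transparent picture of why the two algebras have equal entropy; the paper's version is shorter but leaves the verification of $\alg A\tim\alg C=\alg B\tim\alg C$ to the reader. The only blemish is the typo ``$0<\mu(N)=0$'', which should read $N\neq\emptyset$, $\mu(N)=0$; it does not affect the argument.
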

	
		\begin{proof}
	We can easily find representations
	$\alg{A} = \indalg{ A_1, \dotsc, A_n, \dotsc, A_s }$ and
	$\alg{B} = \indalg{ B_1, \dotsc, B_n, \dotsc, B_t }$
	for which the following equalities hold
	\[
		\mu( A_i \symdiff B_i ) = 0  \quad  (1 \le i \le n),
		\qquad
		\mu(A_{n+k}) = 0, \quad \mu(B_{n+l}) = 0 \quad (k,l \ge 1).
	\]

	Observe that any algebra $\alg{C}$ generated by sets
	of measure~$0$ is independent with any other algebra from $\A$. The algebra
	$\alg{C} := \genalg( \{ A_i \symdiff B_i, A_{n+k}, B_{n+l} \} )$
	is of such a shape and $\alg{A} \tim \alg{C} = \alg{B} \tim \alg{C}$.
	\end{proof}

		\begin{prn}		\label{prn:equal}
	Consider algebras $\alg{A} = \indalg{ A_1, \dotsc, A_k, A_{k+1}, \dotsc, A_n }$
	and $\alg{A'} = \indalg{ A'_1, \dotsc, A'_k, A_{k+1}, \dotsc, A_n }$ with
	\[
		\mu(A_1)= \ldots = \mu(A_k)  \quad  = \quad
		\mu(A'_1) = \ldots =\mu(A'_k).
	\]
	Then
	\[
		I(\alg{A}) = I(\alg{A'}).
	\] \end{prn}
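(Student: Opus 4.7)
The plan is to find an intermediate algebra $\alg B$ of the same shape such that $I(\alg A) = I(\alg B) = I(\alg A')$. Set $K := A_1 \dissum \dotsb \dissum A_k = A'_1 \dissum \dotsb \dissum A'_k$ and $p := \mu(A_1) = \mu(A'_1)$. If $p = 0$, then $K$ has measure zero and $\alg A\restr{\Omega\setminus K} = \alg A'\restr{\Omega\setminus K}$, so \mref{Proposition}{prn:zero} finishes the argument; from here on I would assume $p > 0$.

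Applying \mref{Corollary}{crr:indep} inside the truncated conditional space $(K, \mu\restr K)$ to the pair $\alg A\restr K$, $\alg A'\restr K$ delivers a partition $B_1, \dotsc, B_k$ of $K$ with $\mu(B_j) = p$ and $\mu(A_i \cap B_j) = \mu(A'_i \cap B_j) = p/k$ for all $i, j \le k$. Set $\alg B := \indalg{B_1, \dotsc, B_k, A_{k+1}, \dotsc, A_n}$.

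The key step is to prove $I(\alg A) = I(\alg B)$; by symmetry (running the same construction with $\alg A'$ in place of $\alg A$) this yields $I(\alg A') = I(\alg B)$ as well. I would construct $\alg C = \indalg{C_1, \dotsc, C_k}$ by a Latin-square recipe inside $K$,
\[
	C_s \cap K := \bigdissum_{i=1}^k \bigl( A_i \cap B_{\sigma_s(i)} \bigr),
		\qquad \sigma_s(i) := ((i + s - 2) \bmod k) + 1,
\]
and, when $K \neq \Omega$, splitting each $A_\ell$ with $\ell > k$ by the \hyperref[rmk:assumption]{Darboux Property} into $k$ disjoint pieces $A_\ell \cap C_s$ of equal measure $\mu(A_\ell)/k$. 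A direct verification then yields $\mu(C_s) = 1/k$, the independences $\alg C \indep \alg A$ and $\alg C \indep \alg B$, and—crucially—the equality of atoms $\alg A \tim \alg C = \alg B \tim \alg C$: inside $K$ both algebras have $\{ A_i \cap B_j : i, j \le k \}$ as atoms precisely because $\sigma_s$ runs through a cyclic group as $s$ varies, while outside $K$ atoms coincide trivially since $\alg A$ and $\alg B$ agree there. Additivity of $I$ then gives $I(\alg A) + I(\alg C) = I(\alg B) + I(\alg C)$, whence $I(\alg A) = I(\alg B)$.

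The only real obstacle is that $\alg A\restr K$ and $\alg A'\restr K$ need not be independent of each other inside $K$, so a direct swap $\alg A \leftrightarrow \alg A'$ via a single auxiliary algebra is not available. Routing through $\alg B$, arranged so as to be independent of both $\alg A\restr K$ and $\alg A'\restr K$ within $K$, is exactly what unlocks the Latin-square trick and lets additivity be applied cleanly on each side.
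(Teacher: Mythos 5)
Your proof is correct and follows essentially the same route as the paper: reduce to the case where the new partition of $K$ is independent (within $K$) of the old one, then use the cyclic-shift construction to produce an auxiliary algebra $\alg C$ with $\alg A \tim \alg C = \alg B \tim \alg C$ and apply additivity. The only cosmetic difference is that you make the intermediate algebra $\alg B$ explicit and run the Latin-square argument twice, whereas the paper first says ``we may assume $\alg A\restr K \indep \alg A'\restr K$'' and then runs it once, packaging the extension to $\Omega \setminus K$ via \mref{Lemmas}{lem:anytype} and \ref{lem:separation}.
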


		\begin{proof}
	Put $K := A_1 \dissum\dotsb\dissum A_k$. Since the case $\mu(K) = 0$
	is handled by the previous Proposition we assume  $\mu(K) > 0$. By the same Proposition we shall assume $K = \Omega$ whenever $\mu(K) = 1$. \mref{Corollary}{crr:indep} (see \mref{Remark}{rmk:indep}) when applied
	to~$\A\restr K$ lets us suppose that
	$\alg{A}\restr K \indep \alg{A'}\restr K$.

 	We shall find $\alg{B}$ such that $\alg{A} \tim \alg{B} =
	\alg{A'} \tim \alg{B}$. Firstly, define an algebra
	$\alg{B}_K = \indalg{ B^{(1)}_K, \dotsc, B^{(k)}_K }\in \A\restr K$
	by
	\[
		B^{(i)}_K \::=\: \sbigdissum_{\substack{q-p \,\equiv\, i \;
				(\mathrm{mod}\, k)\\
		                                1 \le p,q \le k}}
		A_p \cap A'_q  \qquad  (1 \le i \le k).
	\]
	Then $\alg{A}\restr{K} \tim \alg{B}_K =
	\alg{A}\restr{K} \tim \alg{A'} \restr{K}  = 
	\alg{A'}\restr{K} \tim \alg{B}_K$.

	If $K \neq \Omega$ then \mref{Lemma}{lem:anytype}
	allows us to find an algebra
	\[
		B_{\Omega \setminus K} =
		\indalg{ B^{(1)}_{\Omega \setminus K}, \dotsc,
		B^{(k)}_{\Omega \setminus K} } \in \A\restr{\Omega \setminus K}
	\]
	that has the same measures of atoms as $\alg{B}_K$
	and satisfies
	$B_{ \Omega \setminus K } \indep \alg{A}\restr{ \Omega \setminus K }$.
	Consider
	\[
		\alg{B} :=
		\indalg{ B^{(1)}_K \cup B^{(1)}_{\Omega \setminus K}, \dotsc,
		B^{(k)}_K \cup B^{(k)}_{\Omega \setminus K} } \in \A.
	\]
	By \mref{Lemma}{lem:separation} we obtain the equality
	$\alg{A} \tim \alg{B} = \alg{A'} \tim \alg{B}$. \end{proof}

		\begin{rmk}		\label{rmk:H}
	As we have seen, all our examples can be made to rely on the function $L$ defined just before
	\mref {Example} {exm:H}. This is by no coincidence. \mref{Propositions}{prn:zero} and \ref{prn:equal}
	mean that if $L(\alg A)$ and $L(\alg B)$ differ on a set of
	measure~$0$ then $I( \alg A )  =  I( \alg B )$; in particular every
	additive partition entropy~$I$ factors through~$L$, i.e. $I = J \circ L$, where
	\[
		J( \phi + \psi ) = J(\phi) + J(\psi)
		\quad\text{with } \phi, \psi \in \Img(L)\text{ and $\phi$, $\psi$ independent.}
	\]
	Moreover it seems quite plausible that this condition is satisfied for $\phi, \psi, \phi+\psi \in \Img(L)$.
	In this paper, we don't pursue this approach any further. \end{rmk}

For any disjoint sets $V, W \in \alg F$
define a nonempty family~$\F V W$ of algebras~$\alg A$ which satisfy
$V \subset A_1$, $W \subset A_2$ for some representation
$\indalg{A_1,\dotsc,A_n}$ of algebra~$\alg A$. Also define an operation
$\T V W \colon \F V W \to \F V W$ by
\[
	\T V W \,\indalg{A_1,\dotsc,A_n} =
	\indalg{(A_1 \setminus V) \cup W, (A_2 \setminus W) \cup V,A_3,\dotsc,
							A_n},
\]
when $V \subset A_1$, $W \subset A_2$.

		\begin{lem}	\label{lem:diffandprod}
	Consider a pair of disjoint sets $V, W$ with
	$\mu(V) = \mu(W)$, and algebras $\alg A, \alg B \in \F VW$.
	Whenever
	\[
		\alg C = \alg A \tim \alg B,
	\]
	we also have
	\[
		\T VW \alg C = \T VW \alg A
					\tim \T VW \alg B.
	\] \end{lem}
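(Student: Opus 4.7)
My plan is to pick representations of $\alg A$ and $\alg B$ adapted to the sets $V$ and~$W$, and then to verify both required facts — independence of $\T V W \alg A$ and $\T V W \alg B$, and the identification of their product with $\T V W \alg C$ — through a direct computation of atomic intersections. Since $\alg A, \alg B \in \F V W$, I can fix representations $\alg A = \indalg{A_1,\dotsc,A_n}$ and $\alg B = \indalg{B_1,\dotsc,B_m}$ with $V \subset A_1 \cap B_1$ and $W \subset A_2 \cap B_2$. Writing $\widetilde A_i$ and $\widetilde B_j$ for the atoms of $\widetilde{\alg A}:=\T V W \alg A$ and $\widetilde{\alg B}:= \T V W \alg B$, the hypothesis $\mu(V)=\mu(W)$ immediately gives $\mu(\widetilde A_i)=\mu(A_i)$ and $\mu(\widetilde B_j)=\mu(B_j)$ for every $i,j$.

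The crucial observation is that because $V$ and $W$ each sit inside a single atom of each partition, one has $V \cap A_i = V \cap B_j = \emptyset$ whenever $i,j \neq 1$, and $W \cap A_i = W \cap B_j = \emptyset$ whenever $i,j \neq 2$. Expanding the four cross-terms in each $\widetilde A_i \cap \widetilde B_j$ and discarding the pieces that vanish by these containments yields the clean table
\begin{align*}
\widetilde A_1 \cap \widetilde B_1 &= ((A_1 \cap B_1) \setminus V) \cup W, \\
\widetilde A_2 \cap \widetilde B_2 &= ((A_2 \cap B_2) \setminus W) \cup V, \\
\widetilde A_i \cap \widetilde B_j &= A_i \cap B_j \qquad\text{for } (i,j)\notin\{(1,1),(2,2)\}.
\end{align*}
From this, both parts of the lemma fall out at once. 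For independence, $\mu(\widetilde A_i \cap \widetilde B_j) = \mu(A_i \cap B_j) = \mu(A_i)\mu(B_j) = \mu(\widetilde A_i)\mu(\widetilde B_j)$, the $(1,1)$ and $(2,2)$ cases using $\mu(V)=\mu(W)$ to balance the swap. For the product identity, $\alg C = \alg A \tim \alg B$ has the atoms $A_i \cap B_j$, among which $V\subset A_1\cap B_1$ and $W\subset A_2\cap B_2$, so applying $\T V W$ to $\alg C$ replaces exactly those two atoms by $((A_1\cap B_1)\setminus V)\cup W$ and $((A_2\cap B_2)\setminus W)\cup V$ — which is precisely the atomic partition of $\widetilde{\alg A} \tim \widetilde{\alg B}$ read off above.

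The main obstacle is nothing deeper than bookkeeping: correctly expanding each intersection of shape $((A_i\setminus V)\cup W) \cap ((B_j \setminus V)\cup W)$ (and its sibling with $V$ and $W$ swapped) into its four pieces and using the containment relations to kill the empty ones. Conceptually, the lemma just records the harmless fact that swapping equal-measure pieces $V \leftrightarrow W$ lying in two distinct atoms of~$\alg C$ commutes with the formation of $\alg A \tim \alg B$.
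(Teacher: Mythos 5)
Your proof is correct and follows essentially the same route as the paper's: fix representations with $V \subset A_1 \cap B_1$, $W \subset A_2 \cap B_2$, and check directly that the swap $V \leftrightarrow W$ changes neither the measures of the atoms nor the measures of their pairwise intersections, while the atoms of $\T VW \alg C$ are exactly the nonempty sets $\widetilde A_i \cap \widetilde B_j$. Your explicit expansion of the intersection table is, if anything, slightly more detailed than the paper's verification.
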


		\begin{proof}
	We can set
	\[
		\alg A = \indalg{A_1,\dotsc,A_n}, \quad
		\alg B = \indalg{B_1,\dotsc,B_m}, \quad
		\alg C =\indalg{C_1,\dotsc,C_p},
	\]
	with  $V \subset A_1 \cap B_1 = C_1$, $W \subset A_2 \cap B_2 = C_2$.
	Put $A'_i := A_i \symdiff V \symdiff W$,
	$B'_j := B_j \symdiff V \symdiff W$ and
	$C'_k := C_k \symdiff V \symdiff W$ for $i,j,k = 1,2$, and  also 
	$A'_i = A_i$ , $B'_j = B_j$, $C'_k = C_k$ for $3 \le i \le n$,
	$3 \le j \le m$, $3 \le k \le p$.
	Then $\indalg{A'_1,\dotsc,A'_n} = \T VW \alg A$,
	$\indalg{B'_1,\dotsc,B'_m} = \T VW \alg B$,
	$\indalg{C'_1,\dotsc,C'_p} = \T VW \alg C$ and
	$\indalg{C'_1,\dotsc,C'_p} =
		\genalg(A'_1,\dotsc,A'_n,B'_1,\dotsc,B'_m)$.

	It suffices to show that
	\begin{equation}	\label{eqn:diffandprod1}
		\begin{gathered}
			\mu( A'_i ) = \mu( A_i), \qquad \mu(B'_j) = \mu(B_j), \\
			\mu( A'_i \cap B'_j ) = \mu( A_i \cap B_j).
		\end{gathered}
	\end{equation}
	When $i = j = 1, 2$ this follows from equalities $\mu(V) = \mu(W)$ and
	inclusions $V \subset A_1 \cap B_1$, $W \subset A_2 \cap B_2$. For all
	remaining pairs $(i,j)$, $i = 1, \dotsc, n$, $j = 1, \dotsc, m$ the last
	of~\meqref{equalities in}{eqn:diffandprod1} is also clear, and we even
	have $A'_i \cap B'_j = A_i \cap B_j$. \end{proof}

For any $\lambda > 0$ put
\[
	\varepsilon(\lambda) := \min \left( 1/(1+\lambda), 1/(1+\lambda^{-1})
								\right).
\]
This notation has the following sense --- if we divide~$\Omega$
into two sets having their quotient of measures equal to~$\lambda$
then $\varepsilon(\lambda)$ will be the measure of the smaller of them.

Whenever $V \cap W = \emptyset$ and $\lambda >0$ write
\[\begin{split}
	\Fl V W \lambda := \big\lbrace \alg A \in \A \colon &
	\tfrac{\mu(A_2)}{\mu(A_1)} = \lambda, V \subset A_1, W \subset A_2 \\
					&\text{ for some representation }
			\alg A = \indalg{A_1, \dotsc, A_n}\big\rbrace.
\end{split}\]

		\begin{lem}	\label{lem:epsilon}
	For any $\lambda>0$ we have what follows:

	\Emph{A.} For any pair of disjoint sets $V, W$
	such that $\mu(V), \mu(W) \le \varepsilon(\lambda)$ there is
	an algebra $\alg A = \indalg{A_1, A_2} \in \Fl V W  \lambda$.

	\Emph{B.} For any algebra $\alg{A} = \indalg{ A_1, \dotsc, A_n }$,
	$n \ge 2$ with $\mu(A_2) / \mu(A_1) = \lambda$, a number
	$\varkappa > 0$ and sets $V \subset A_1$, $W \subset A_2$ that satisfy
	\[
		\mu(V), \mu(W) \le \varepsilon(\varkappa)\varepsilon(\lambda)
		\mu(A_1 \dissum A_2)
	\]
	there is an algebra $\alg B = \indalg{B_1, B_2}$
	with the property that $\alg A \indep \alg B$, $\alg B \in \Fl V W  \varkappa$.

	\Emph{C.} For any disjoint $V, W$ and $\varkappa >0$ such that
	$\mu(V), \mu(W) \le \varepsilon(\varkappa)\varepsilon(\lambda)$,
	there exist algebras $\alg A = \indalg{A_1, A_2}$, $\alg B =
	\indalg{B_1, B_2}$ such that
	\[
		\alg A \in \Fl V W  \lambda, \quad
		\alg B \in \Fl V W  \varkappa, \quad\text{and}\quad
		\alg A \indep \alg B.
	\]

	\Emph{D.} If $\alg A \in \Fl V W \lambda$,
	$\alg B \in \Fl V W \varkappa$ and $\alg C = \alg A \tim \alg B$ then
	$\alg C \in \Fl V W {\lambda \varkappa}.$ \end{lem}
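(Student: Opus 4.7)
The four parts differ greatly in depth: \textbf{A} is a one-shot use of the Darboux property, \textbf{C} glues \textbf{A} and \textbf{B} together, \textbf{D} is a direct computation from independence, and \textbf{B} contains the real work.

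For \textbf{A}, observe that $\varepsilon(\lambda) = \min(1/(1+\lambda),\, \lambda/(1+\lambda))$, so the hypothesis $\mu(V), \mu(W) \le \varepsilon(\lambda)$ yields both $\mu(V) \le 1/(1+\lambda)$ and $\mu(W) \le \lambda/(1+\lambda) = 1 - 1/(1+\lambda)$. I would apply the Darboux property inside $\Omega \setminus (V \cup W)$ to enlarge $V$ by a set of measure $1/(1+\lambda) - \mu(V)$, producing $A_1 \supset V$ with $A_1 \cap W = \emptyset$ and $\mu(A_1) = 1/(1+\lambda)$. Then $A_2 := \Omega \setminus A_1$ contains $W$ and satisfies $\mu(A_2)/\mu(A_1) = \lambda$.

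For \textbf{B}, I would construct $B_1$ piecewise against the atoms of $\alg A$: pick $C_i \subset A_i$ with $\mu(C_i) = \mu(A_i)/(1+\varkappa)$ for each $i$, additionally arranging $V \subset C_1$ and $C_2 \cap W = \emptyset$. For $i \ge 3$ any such $C_i$ exists by Darboux, while for $i = 1$ and $i = 2$ the constraints become $\mu(V) \le \mu(A_1)/(1+\varkappa)$ and $\mu(W) \le \mu(A_2)\,\varkappa/(1+\varkappa)$, respectively. Both follow from the hypothesis via the identity $\min(\mu(A_1), \mu(A_2)) = \varepsilon(\lambda)\mu(A_1 \cup A_2)$ combined with the bounds $\varepsilon(\varkappa) \le 1/(1+\varkappa)$ and $\varepsilon(\varkappa) \le \varkappa/(1+\varkappa)$. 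Taking $B_1 := \bigcup_i C_i$ and $B_2 := \Omega \setminus B_1$, we have $\mu(A_i \cap B_1) = \mu(A_i)\mu(B_1)$ by construction, so $\alg A \indep \alg B$; moreover $V \subset B_1$, $W \subset B_2$, and $\mu(B_2)/\mu(B_1) = \varkappa$. The only real obstacle here is the arithmetic of these inequalities --- no conceptual subtlety, just careful bookkeeping around the two regimes of $\varepsilon$.

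For \textbf{C}, part \textbf{A} supplies $\alg A \in \Fl V W \lambda$ with $\mu(A_1 \cup A_2) = 1$, and the hypothesis of \textbf{B} then reduces to exactly $\mu(V), \mu(W) \le \varepsilon(\varkappa)\varepsilon(\lambda)$, which is the assumption of \textbf{C}, so \textbf{B} delivers the required $\alg B$. For \textbf{D}, pick representations $\alg A = \indalg{A_1, A_2, \ldots}$ and $\alg B = \indalg{B_1, B_2, \ldots}$ with $V \subset A_1 \cap B_1$ and $W \subset A_2 \cap B_2$; the sets $A_1 \cap B_1$ and $A_2 \cap B_2$ are then nonempty atoms of $\alg C = \alg A \tim \alg B$, and independence gives $\mu(A_2 \cap B_2)/\mu(A_1 \cap B_1) = (\mu(A_2)/\mu(A_1))(\mu(B_2)/\mu(B_1)) = \lambda\varkappa$, whence $\alg C \in \Fl V W {\lambda\varkappa}$.
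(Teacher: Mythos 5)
Your proposal is correct and follows essentially the same route as the paper: part A by a direct application of the Darboux property, part B by splitting each atom $A_i$ in the ratio $\varkappa$ while placing $V$ and avoiding $W$ (the paper phrases this as applying part A inside each conditional space $A_i$), part C by composing A and B, and part D by the independence computation. The arithmetic you carry out via $\min(\mu(A_1),\mu(A_2))=\varepsilon(\lambda)\mu(A_1\dissum A_2)$ is exactly the paper's observation that $\mu\restr{A_1}(V),\mu\restr{A_2}(W)\le\varepsilon(\varkappa)$.
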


		\begin{proof}
	\Emph{A.} If $\mu(V), \mu(W) \le \varepsilon(\lambda)$ then
	by the \hyperref[rmk:assumption]{Darboux property} there is
	a partition $\Omega = A_1 \dissum A_2$ such that
	$\mu(A_1) = 1/(1+\lambda)$ and $\mu(A_2) = 1/(1+\lambda^{-1})$ with
	$V \subset A_1$, $W \subset A_2$. Then $\mu(A_2) / \mu(A_1) = \lambda$.

        \Emph{B.} By the assumptions $\mu\restr{A_1}(V),
	\mu\restr{A_2}(W) \le \varepsilon(\varkappa)$. According
	to~\Emph{A.} there exist sets $C_i^{(j)} \in \A \restr {A_i},$
	$1 \le i \le n$, $j = 1,2$ such that $A_i = C_i^{(1)} + C_i^{(2)}$,
	$\mu(C_i^{(2)})/\mu(C_i^{(1)})=\varkappa$, and $V \subset C_1^{(1)}$,
	$W \subset C_2^{(2)}$.
	We finish by writing
	\[
		B_j := C_1^{(j)} \dissum \dotsb \dissum C_n^{(j)}.
	\]

	\Emph{C.} follows from \Emph{A.} and~\Emph{B.}, whereas~\Emph{D.} is
	obvious. \end{proof}

The following lemma is utterly straightforward.

		\begin{lem}		\label{lem:taumor}
	Whenever $\alg A \in \Fl VW\lambda$, we have $\alg A \in \Fl  {V_i}{W_i}\lambda$, $1 \le i \le n$ and
	\[
	\T VW \alg A = \T {V_1}{W_1} \dotsm \T {V_n}{W_n} \alg A
	\]
	for any $V=V_1 \dissum \dotsb \dissum V_n$, $W=W_1 \dissum \dotsb \dissum W_n$.
%	with $\mu(V_i)=\mu(W_i)$.
	\end{lem}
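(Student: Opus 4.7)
My plan is to handle the two assertions in the lemma separately, since the first is essentially a definitional unpacking and the second is a routine induction on $n$.

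For the first assertion, I would just fix a representation $\alg A = \indalg{A_1, \dotsc, A_m}$ witnessing $\alg A \in \Fl VW\lambda$ (so that $V \subset A_1$, $W \subset A_2$ and $\mu(A_2)/\mu(A_1) = \lambda$). Since $V_i \subset V \subset A_1$ and $W_i \subset W \subset A_2$, the very same representation witnesses $\alg A \in \Fl{V_i}{W_i}\lambda$. Nothing more is needed here.

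For the composition identity I would proceed by induction on $n$. The base case $n=1$ is trivial. For the inductive step, I would set $\alg A' := \T{V_1}{W_1} \alg A$, so that
\[
\alg A' = \indalg{ A_1', A_2', A_3, \dotsc, A_m }, \qquad A_1' = (A_1 \setminus V_1) \cup W_1, \quad A_2' = (A_2 \setminus W_1) \cup V_1.
\]
Then I would check that $V \setminus V_1 = V_2 \dissum \dotsb \dissum V_n \subset A_1 \setminus V_1 \subset A_1'$ and similarly $W \setminus W_1 \subset A_2'$, which is where disjointness of $V$ and $W$ and the inclusions $V_1 \subset A_1$, $W_1 \subset A_2$ come in. This shows $\alg A' \in \Fl{V \setminus V_1}{W \setminus W_1}\lambda$, so the inductive hypothesis applies and yields $\T{V_2}{W_2} \dotsm \T{V_n}{W_n} \alg A' = \T{V \setminus V_1}{W \setminus W_1} \alg A'$. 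To finish I would verify the one-step identity $\T{V \setminus V_1}{W \setminus W_1}\, \T{V_1}{W_1} \alg A = \T VW \alg A$, which is a direct computation with set symmetric differences.

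The only thing to watch is the bookkeeping — that after each application of $\T{V_i}{W_i}$ the remaining $V_{i+1}, \dotsc, V_n$ and $W_{i+1}, \dotsc, W_n$ still sit in the appropriate atoms, so the next $\T$ operation is well-defined. As the paper itself says, this is utterly straightforward and involves no real obstacle; the induction merely formalises the observation that the $\T{V_i}{W_i}$'s act on disjoint parts of $V$ and $W$ and therefore commute in the only way that matters.
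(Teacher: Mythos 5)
Your proof is correct; the paper itself offers no argument here (it dismisses the lemma as ``utterly straightforward''), and your induction with the one-step identity $\T{V\setminus V_1}{W\setminus W_1}\,\T{V_1}{W_1}\alg A = \T VW\alg A$ is exactly the routine verification that was left to the reader. One small imprecision: since $\mu(A_1') = \mu(A_1)-\mu(V_1)+\mu(W_1)$, the claim $\alg A' \in \Fl{V\setminus V_1}{W\setminus W_1}{\lambda}$ holds only when $\mu(V_1)=\mu(W_1)$, which the lemma does not assume; in general $\alg A'$ lies in $\Fl{V\setminus V_1}{W\setminus W_1}{\lambda'}$ for a possibly different $\lambda'$, but this is harmless because the composition identity is purely set-theoretic and the inductive hypothesis applies for any value of the ratio.
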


		\begin{rmk}
	It follows from \mref{Proposition}{prn:equal} that if $\alg A \in \Fl VW1$ then
	\[
		I( \T V W  \alg A ) = I(\alg A).
	\]
	\end{rmk}

		\begin{prn}		\label{prn:same}
	Consider $\lambda > 0$ and a pair of disjoint sets $V, W$ with
	$\mu(V) = \mu(W)$.

	If $\alg A, \alg B \in \Fl V W  \lambda$ then
	\[
		I( \T V W  \alg A ) - I(\alg A) =
		I( \T V W  \alg B ) - I(\alg B).
	\] \end{prn}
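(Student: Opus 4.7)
The plan is to exploit the observation (in the unnumbered Remark preceding the statement) that $\T V W$ acts trivially on $\Fl V W 1$: it permutes two atoms of equal measure, so \mref{Proposition}{prn:equal} gives $I(\T V W \alg A) = I(\alg A)$ whenever $\alg A \in \Fl V W 1$. Given $\alg A, \alg B \in \Fl V W \lambda$, I would seek a single algebra $\alg C \in \Fl V W {1/\lambda}$ that is independent of both $\alg A$ and $\alg B$. Then \mref{Lemma}{lem:epsilon}D puts $\alg A \tim \alg C$ in $\Fl V W 1$, so $I(\T V W (\alg A \tim \alg C)) = I(\alg A \tim \alg C)$; combining with \mref{Lemma}{lem:diffandprod} — which ensures $\T V W(\alg A \tim \alg C) = \T V W \alg A \tim \T V W \alg C$ — and the additivity of $I$, one obtains
\[
	I(\T V W \alg A) - I(\alg A) = I(\alg C) - I(\T V W \alg C).
\]
The right-hand side depends only on $\alg C$; the analogous identity for $\alg B$ (using the very same $\alg C$) has the same right-hand side, which proves the proposition.

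The main obstacle is the construction of such a $\alg C$. The natural attempt applies \mref{Lemma}{lem:epsilon}B to $\alg D := \genalg(\alg A \cup \alg B)$ with atoms labelled so the first contains $V$ and the second contains $W$. This would furnish the required independent two-atom algebra in $\Fl V W {1/\lambda}$, but only under the smallness hypothesis
\[
	\mu(V),\,\mu(W) \le \varepsilon(1/\lambda)\,\varepsilon\bigl(\mu(D_2)/\mu(D_1)\bigr)\,\mu(D_1 \dissum D_2),
\]
which can fail, since $\mu(D_1)$ and $\mu(D_2)$ may be as small as $\mu(V)$ and $\mu(W)$ themselves.

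To bypass this I would split $V$ and $W$ into many small pieces. Using the \hyperref[rmk:assumption]{Darboux Property}, choose $V = V_1 \dissum \dotsb \dissum V_k$ and $W = W_1 \dissum \dotsb \dissum W_k$ with $\mu(V_i) = \mu(W_i) = \mu(V)/k$, set $\alg A_0 := \alg A$, $\alg A_i := \T{V_i}{W_i}\alg A_{i-1}$, and likewise $\alg B_i$. \mref{Lemma}{lem:taumor} gives $\alg A_k = \T V W \alg A$ and $\alg B_k = \T V W \alg B$. A direct computation shows that swapping equal-measure subsets preserves the measures of the atoms of $\genalg(\alg A_{i-1} \cup \alg B_{i-1})$ containing $V_i$ and $W_i$; these stay equal to $\mu(A_1 \cap B_1)$ and $\mu(A_2 \cap B_2)$ for every $i$. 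Hence for $k$ large enough the smallness hypothesis of \mref{Lemma}{lem:epsilon}B holds uniformly in $i$, yielding $\alg C_i$, and applying the argument of the first paragraph to each pair $(\alg A_{i-1}, \alg B_{i-1})$ with $V_i, W_i$ in place of $V, W$ yields
\[
	I(\alg A_i) - I(\alg A_{i-1}) = I(\alg B_i) - I(\alg B_{i-1}).
\]
Telescoping in $i$ produces $I(\T V W \alg A) - I(\alg A) = I(\T V W \alg B) - I(\alg B)$.
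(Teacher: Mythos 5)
Your proof is correct and follows essentially the same route as the paper: reduce to the case where $V,W$ are small relative to the atoms $A_1\cap B_1$, $A_2\cap B_2$ of $\genalg(\alg A\cup\alg B)$ by splitting into equal-measure pieces and telescoping via \mref{Lemma}{lem:taumor}, then use \mref{Lemma}{lem:epsilon}B to produce a common $\alg C\in\Fl VW{1/\lambda}$ independent of $\genalg(\alg A\cup\alg B)$, and conclude with \mref{Lemma}{lem:diffandprod}, \mref{Lemma}{lem:epsilon}D and the remark on $\Fl VW1$. The only point the paper covers that you omit is the degenerate case $\mu(V)=0$ (where $\mu(A_1\cap B_1)$ may vanish and the ratio $\varkappa$ is undefined), which is dispatched at the outset by \mref{Proposition}{prn:zero}.
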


		\begin{proof}
	Fix algebras $\alg A$ and $\alg B$.

	By \mref{Proposition}{prn:zero} we can assume that $\mu(V) > 0$;
	then for some $\varkappa>0$ we get
	$\genalg (\alg A \cup \alg B) \in \Fl V W \varkappa$.

	Suppose for the moment that we also have
	\[
		\mu(V) = \mu(W) \le \varepsilon(1/\lambda) \tim \varepsilon(\varkappa)
			\tim \mu(A_1 \cap B_1 \dissum A_2 \cap B_2),
	\]
	where $A_i, B_i$ are such atoms of algebras $\alg A$ and $\alg B$ that
	$V \subset A_1 \cap B_1$ and $W \subset A_2 \cap B_2$.

	With the use of \mref{Lemma}{lem:epsilon}B we find an algebra
	$\alg C = \indalg{ C_1, C_2 } \in \Fl V W {1/\lambda}$ such that
	$\genalg (\alg A \cup \alg B) \indep \alg C$. Then,
	using~\mref{Lemma}{lem:diffandprod}, \mref{Lemma}{lem:epsilon}D
	and the Remark above we get the equalities
	$I( \alg A \tim \alg C ) = I( \T V W  \alg A \tim \T V W  \alg C )$ and
	$I( \alg B \tim \alg C ) = I( \T V W  \alg B \tim \T V W  \alg C )$.
	We are done.
	
	In the general case divide the sets $V$ and $W$ into the same number of
	pieces of equal measure, the measure being bound by
	\[
		\varepsilon(1/\lambda) \tim \varepsilon(\varkappa)
				\tim \mu(A_1 \cap B_1 \dissum A_2 \cap B_2).
	\]
	Subsequently apply \mref{Lemma}{lem:taumor} for $\alg A, \alg B \in \Fl V W  \lambda$ and
	$\genalg (\alg A \cup \alg B) \in \Fl V W \varkappa$
	to the already derived instance of this Proposition where we have constraints
	on the size of $V$ and $W$.\end{proof}

		\begin{prn}\label{prn:key}
	For any sets $V, W$ with
	$\mu(V) = \mu(W)$ and any $\lambda > 0$ there is a unique
	$\Delta(V,W,\lambda)$ such that the following conditions are satisfied:
	\begin{enumerate}
		\item		\label{prop:deltadef}
			Whenever $V\cap W=\emptyset$ and there exists
			$\alg A\in \Fl VW\lambda$ we have
			\begin{equation}		\label{eqn:deltadef}
				\Delta(V,W,\lambda) =
				I(\T V W \alg A) - I(\alg A),
			\end{equation}
		\item		\label{prop:cycle}
			For any sets $U$, $V$ and $W$ of the same measure $\mu$
			we have
			\[
				\Delta(U,W,\lambda) = \Delta(U,V,\lambda) +
				\Delta(V,W,\lambda).
			\]
		\item		\label{prop:meslike}
			For $V = V_1 \dissum V_2$, $W = W_1 \dissum W_2$,
			$\mu(V_i) = \mu(W_i)$, $i = 1,2$ we have
			\[
				\Delta(V,W,\lambda) = \Delta(V_1,W_1,\lambda) +
				\Delta(V_2,W_2,\lambda).
			\]
		\item		\label{prop:loglike}
			For $\varkappa > 0$ we have
			\[
				\Delta(V,W,\kappa\lambda) = \Delta(V,W,\kappa) +
				\Delta(V,W,\lambda).
			\]
	\end{enumerate} \end{prn}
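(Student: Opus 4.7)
The starting point is \mref{Proposition}{prn:same}: for disjoint $V, W$ with $\mu(V) = \mu(W)$ and $\Fl VW\lambda$ nonempty, the quantity $I(\T VW \alg A) - I(\alg A)$ does not depend on the choice of $\alg A \in \Fl VW\lambda$, so this is an unambiguous definition of $\Delta(V, W, \lambda)$ in the base case. By part~A of \mref{Lemma}{lem:epsilon}, every disjoint pair of sufficiently small measure is covered. The task then splits into verifying (1)--(4) for disjoint pairs and extending $\Delta$ to arbitrary pairs of equal measure.

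Property~(3) in the disjoint case follows immediately from \mref{Lemma}{lem:taumor} together with the observation that any $\alg A \in \Fl VW\lambda$ also lies in each $\Fl{V_i}{W_i}\lambda$ and that $\T{V_2}{W_2}\alg A$ still lies in $\Fl{V_1}{W_1}\lambda$ (the measures of the two distinguished atoms do not change); a two-term telescoping then gives the additivity. Property~(4), for disjoint $V,W$ of measure at most $\varepsilon(\lambda)\varepsilon(\kappa)$, comes from picking $\alg A \in \Fl VW\lambda$ and $\alg B \in \Fl VW\kappa$ with $\alg A \indep \alg B$ via part~C of \mref{Lemma}{lem:epsilon}, noting $\alg A \tim \alg B \in \Fl VW{\kappa\lambda}$ by part~D, and then combining \mref{Lemma}{lem:diffandprod} with the additivity of~$I$.

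For the cocycle~(2) on a pairwise disjoint small triple $U,V,W$ of common measure, I would build a single algebra $\alg A = \indalg{A_1,A_2,A_3,\dots}$ with $U \subset A_1$, $V \subset A_2$, $W \subset A_3$ and $\mu(A_2) = \mu(A_3) = \lambda\mu(A_1)$; this $\alg A$ lies simultaneously in $\Fl UV\lambda$ and $\Fl UW\lambda$, while $\T UV \alg A$ lies in $\Fl VW\lambda$. Direct inspection shows that $\T VW \T UV \alg A$ and $\T UW \alg A$ share all atoms except those of index~$2$ and~$3$, and these two pairs of atoms carry the common pair of measures $(\mu(A_2),\mu(A_3)) = (\lambda\mu(A_1),\lambda\mu(A_1))$; \mref{Proposition}{prn:equal} therefore equates their values of~$I$, and the telescoping
\[
I(\T UW \alg A) - I(\alg A) = [I(\T VW \T UV \alg A) - I(\T UV \alg A)] + [I(\T UV \alg A) - I(\alg A)]
\]
produces $\Delta(U,W,\lambda) = \Delta(V,W,\lambda) + \Delta(U,V,\lambda)$.

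Finally, for arbitrary $V,W$ with $\mu(V) = \mu(W)$ I partition both into matched disjoint pieces $V = \bigdissum V_i$, $W = \bigdissum W_i$ with $\mu(V_i) = \mu(W_i)$ small and $V_i \cap W_i = \emptyset$ (possible by the Darboux property: subdivide $V \cap W$ finely and shuffle so that no matched pair collides), and put $\Delta(V,W,\lambda) := \sum_i \Delta(V_i, W_i, \lambda)$. The main obstacle is independence of the splitting: I would prove that any refinement of a matched decomposition preserves the sum via the disjoint version of~(3), and that two matched decompositions of the same pair reconcile through common refinements together with the small cocycle just established. Once well-definedness is in hand, properties (1)--(4) for the extended $\Delta$ follow by decomposing the sets at hand (pairs for (3) and~(4), triples for~(2)) into small pairwise-disjoint matched systems; uniqueness is then immediate, since (2) applied with $U = V = W$ forces $\Delta(V,V,\lambda) = 0$, after which (1) and (3) pin down $\Delta(V,W,\lambda)$ on every pair from its values on small disjoint pairs.
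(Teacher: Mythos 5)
Your overall architecture coincides with the paper's: define $\Delta$ on small disjoint pairs via \mref{Proposition}{prn:same}, verify the four properties there, and extend to arbitrary pairs by additivity over a matched decomposition. Two of your local choices genuinely differ. For the cocycle~(2) the paper places $U\subset A_1$ and $V\cup W\subset A_2$ inside a single algebra, so that $(\T VW\circ\T UV)\alg A=\T UW\alg A$ holds exactly; your three-atom construction with $\mu(A_2)=\mu(A_3)=\lambda\mu(A_1)$ instead compares $\T VW\T UV\alg A$ with $\T UW\alg A$ through \mref{Proposition}{prn:equal}, which is correct and merely costs a slightly different smallness bound on $\mu(U)$. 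Second, by shuffling $V\cap W$ cyclically you arrange for all matched pieces to satisfy $V_i\cap W_i=\emptyset$, which lets you bypass the paper's intermediate step (its part~II) defining $\Delta$ on small non-disjoint pairs by $\Delta(V\setminus W,W\setminus V,\lambda)$.

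The gap is in the well-definedness of the extension. Given two matched decompositions $V=\bigdissum_i V_i=\bigdissum_j V'_j$ and $W=\bigdissum_i W_i=\bigdissum_j W'_j$, the common refinements $\{V_i\cap V'_j\}$ of $V$ and $\{W_i\cap W'_j\}$ of $W$ carry two incompatible pairings: nothing forces $\mu(V_i\cap V'_j)=\mu(W_i\cap W'_j)$, so neither sum refines to a common matched decomposition and the ``disjoint version of (3)'' has nothing to telescope against. The small cocycle does not obviously repair this either: the natural move $\Delta(V_i,W_i,\lambda)=\Delta(V_i,X_i,\lambda)+\Delta(X_i,W_i,\lambda)$ needs auxiliary sets $X_i$ disjoint from everything in sight, and when, say, $\mu(V)=\mu(W)=1/2$ with $V\cap W=\emptyset$ there is no room for them. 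The paper's device, absent from your sketch, is to interpose a third pair of partitions $V=\bigdissum_l V''_l$, $W=\bigdissum_l W''_l$ chosen (via \mref{Corollary}{crr:indep}) so that $\indalg{V''_1,\dotsc,V''_m}$ is independent of both given partitions of $V$ in the conditional space $(V,\alg F\restr V,\mu\restr V)$, and likewise for $W$; independence forces $\mu(V_i\cap V''_l)=\mu(V_i)\mu(V''_l)/\mu(V)=\mu(W_i\cap W''_l)$, and similarly for the primed decomposition, whence both sums equal $\sum_l\Delta(V''_l,W''_l,\lambda)$ by property~(3). Without this (or an equivalent) step the definition $\Delta(V,W,\lambda):=\sum_i\Delta(V_i,W_i,\lambda)$ is not shown to be independent of the chosen decomposition.
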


		\begin{proof}
	Notice at first that in the case when the family $\Fl V W \lambda$
	is nonempty, Proposition \ref {prn:same} proves that the right hand side of
	\meqref {equality} {eqn:deltadef} does not depend on the algebra
	$\alg A \in \Fl V W \lambda$, i.e. that \meqref {formula} {eqn:deltadef}
	defines the quantity $\Delta(V,W,\lambda)$ well.	
	Observe  also  that by \mref {Lemma} {lem:epsilon}A the family $\Fl V W \lambda$ is
	nonempty when the sets $V, W$ are disjoint and satisfy $\mu(V) = \mu(W) < \varepsilon(\lambda)$.

	The proof is made of three parts.
	
	\textbf{I.~} Assume that $\Fl V W \lambda \neq \emptyset$ and define for now
	$\Delta$ simply by~\eqref {eqn:deltadef}.
	In particular we assume that $V, W$ are disjoint.
	Using just defined $\Delta$'s we shall prove \mref{property} {prop:cycle} when
	$\mu(V) = \mu(W) < (1/2)\, \varepsilon(\lambda)$,
	\mref {property} {prop:meslike} assuming that
	$\mu(V) = \mu(W) < \varepsilon(\lambda)$ and
	\mref {property} {prop:loglike} if
	$\mu(V) = \mu(W) < \varepsilon(\varkappa) \varepsilon(\lambda)$.
	
	Indeed, under these assumptions we shall show
	\mref {property} {prop:cycle} by choosing an algebra $\alg A$
	in such a way that $U \in A_1$, $V \cup W \in A_2$, where $A_2$ is the atom of
                 $\alg A$ with~$\lambda$ times bigger $\mu$-measure than~$A_1$,
	and noting that
	\[
		\alg A \in \Fl U W \lambda,		\quad
		\alg A \in \Fl U V \lambda,		\quad
		\T U V \alg A \in \Fl V W \lambda,	\quad
		(\T V W \circ \T U V) \alg A = \T U W \alg A.
	\]
	Then
	\begin{align*}
		\Delta(U,W,\lambda)
		&	= I(\T U W \alg A) - I(\alg A)	\\
		&	= I( (\T V W \circ \T U V)\alg A) - I(\T U V \alg A)
			  +I(\T U V \alg A) - I(\alg A) \\
		&	= \Delta(V,W,\lambda) + \Delta(U,V,\lambda).
	\end{align*}

	In order to show \mref{property}{prop:meslike} we select
	$\alg A \in \Fl V W \lambda$ and obtain
	\begin{align*}
		\Delta(V,W,\lambda)
		&	= I(\T {V_2}{W_2} \circ \T {V_1}{W_1} \alg A)
			  - I(\alg A)	\\
		&	= I(\T {V_2}{W_2} \circ \T {V_1}{W_1} \alg A)
			  - I(\T {V_1}{W_1} \alg A) + I(\T {V_1}{W_1} \alg A)
			  - I(\alg A)	\\
		&	= \Delta(V_2,W_2,\lambda) + \Delta(V_1,W_1,\lambda).
	\end{align*} 

	It remains to prove \mref{property}{prop:loglike}. Observe that by
	\mref{Lemma}{lem:epsilon}C there exist $\alg A \in \Fl V W \lambda$,
	$\alg B \in \Fl V W \varkappa$ such that $\alg A \indep \alg B$.
	What is more, by \mref{Lemma}{lem:epsilon}D we have
	$\alg A \tim \alg B \in \Fl V W {\lambda \tim \varkappa}$;
	now using \mref{Lemma}{lem:diffandprod} we get%; ... we get
	\begin{align*}
		\Delta(V,W,\lambda \varkappa)
		&	= I(\T V W (\alg A \tim \alg B))
			  - I(\alg A \tim \alg B)	\\
		&	= I(\T V W \alg A) - I(\alg A)
			  + I(\T V W \alg B) - I(\alg B)	\\
		&	= \Delta(V,W,\lambda) + \Delta(V,W,\varkappa).
	\end{align*}

	Notice also that if $\mu(V) < \varepsilon(\lambda)$ then
	\[
		\Delta(V,W,\lambda) = - \Delta(W,V,\lambda),
	\]
	which we shall use in the next step of the proof.
	
	\textbf{II.~} We drop now the assumption that the sets
	$V$ and $W$ be disjoint, that is we consider the case when
	$\Fl {V\setminus W} {\,W\setminus V} \lambda \neq \emptyset$
	(e.g. when $\mu(V) = \mu(W) < \varepsilon(\lambda)$),
	and define $\Delta$ by
	\[
		\Delta(V, W, \lambda) :=
				\Delta(V\setminus W, W\setminus V, \lambda)
	\]

	By supposing, if neccessary, that we have $\mu(V) < 1/4$ in addition
	to the assumptions of part~\textbf{I}, we shall show the required
	\mref{properties}{prop:cycle} and \ref {prop:meslike}
	(\mref{property}{prop:loglike} is obvious),
	without assuming that $V$ and $W$ are disjoint.

	In order to to show these properties we note at first that for
	any~$U$, $V$ and $W$ having the required properties
	and for any~$X$ such that $\mu(X) =\mu(V)$ and
	$X \cap (U \cup V \cup W) = \emptyset$ we have
	\[
		\Delta(V, W, \lambda) =
				\Delta(V,X,\lambda) +\Delta(X,W, \lambda)
	\]
	Indeed set $V' := V \setminus W$, $W' := W \setminus V$,
	$A= V \cap W$ then  divide the set~$X$ into two parts --- $X'$~of measure
	$\mu(X') = \mu (V')$ and $B$ of measure $\mu(B) = \mu(A)$.
	Using step \textbf{I}., we obtain:
	\begin{align*}
		\Delta(V,W,\lambda)
		&	= \Delta(V',W',\lambda)	\\
		&	= \Delta(V',X',\lambda) + \Delta(X',W', \lambda) \\
		&	= \Delta(V',X',\lambda) +\Delta(A,B,\lambda)
			  + \Delta(X',W', \lambda)+\Delta(B,A,\lambda)	\\
		&	= \Delta(V,X,\lambda) +\Delta(X,W, \lambda).
	\end{align*}

	From the obtained equality we arrive at \mref{property}{prop:cycle}:
	\begin{align*}
		\Delta(U,W, \lambda)
		&	= \Delta(U,X, \lambda) + \Delta(X, W, \lambda)	\\
		&	= \Delta(U, X, \lambda) + \Delta(X, V, \lambda)
			  + \Delta(V, X, \lambda) + \Delta(X, W, \lambda) \\
		&	= \Delta(U, V, \lambda) + \Delta(V, W, \lambda),
	\end{align*}
	and also, by a division of the set $X$ into the parts $X_1$, $X_2$, with their
	measures equal to $\mu(V_1)$ and $\mu(V_2)$, respectively we arrive at
	\mref{property}{prop:meslike}:
	\begin{align*}
		\Delta(V,W,\lambda)
		&	= \Delta(V,X,\lambda) + \Delta(X,W,\lambda)	\\
		&	= \Delta(V_1,X_1,\lambda) + \Delta(V_2,X_2,\lambda)
			  + \Delta(X_1,W_1,\lambda) + \Delta(X_2,W_2,\lambda) \\
		&	= \Delta(V_1,W_1,\lambda) + \Delta(V_2,W_2,\lambda).
	\end{align*}
	
	\textbf{III.~} In the general case ($\mu(V) = \mu(W)$, $\lambda >
	0$) the number $\Delta(V, W, \lambda)$ can be uniquely defined
	by
	\[
		\Delta(V,W,\lambda) := \ssum_{1 \le i \le k}
		\Delta(V_i,W_i,\lambda),
	\]
	where $V = V_1 \dissum \dotsb \dissum V_k$, $W = W_1 \dissum \dotsb
	\dissum W_k$, $\mu(V_i) = \mu(W_i) < \varepsilon(\lambda)$; and where
	$\Delta(V_i,W_i,\lambda)$ is defined as in part~\textbf{II}.

	We shall show now that the number $\Delta(V,W,\lambda)$ is well-defined in this way .
	For any partitions $V = V_1 \dissum \dotsb \dissum V_k$,
	$W= W_1 \dissum \dotsb \dissum W_k$ and
	$V = V'_1 \dissum \dotsb \dissum V'_l$,
	$W= W'_1 \dissum \dotsb \dissum W'_l$ satisfying respectively
	$\mu(V_i) = \mu(W_i) < \varepsilon(\lambda)$, $1 \le i \le k$
	and $\mu(V'_j) = \mu(W'_j) < \varepsilon(\lambda)$, $1 \le j \le l$,
	we choose another pair of partitions
	$V = V''_1 \dissum \dotsb \dissum V''_m$,
	$W = W''_1 \dissum \dotsb \dissum W''_m$ in such a way that
	$\mu(V''_i) = \mu(W''_i) < \varepsilon(\lambda)$, and
	$\indalg{V''_1,\dotsc,V''_m}$ is 
	independent with $\indalg{V_1,\dotsc,V_k}$ and $\indalg{V'_1,\dotsc,V'_l}$
	in the space $(V,\alg F \restr V, \mu \restr V)$, and also that
	$\indalg{W''_1,\dotsc,W''_m}$ is 
	independent with $\indalg{W_1,\dotsc,W_k}$ and $\indalg{W'_1,\dotsc,W'_l}$
	in the space $(W,\alg F \restr W, \mu \restr W)$ . Then
	\begin{align*}
		\mu(V_i \cap V''_j)	&= \mu(W_i \cap W''_j)	\\
		\mu(V'_i \cap V''_j)	&= \mu(W'_i \cap W''_j).
	\end{align*}
	Hence
	\[
		\ssum_{1 \le i \le k} \Delta(V_i, W_i, \lambda) =
		\ssum_{\substack{1 \le i \le k \\
				1 \le j \le m} }
			\Delta(V_i \cap V''_j, W_i \cap W''_j, \lambda) =
		\ssum_{1 \le j \le m} \Delta(V''_j, W''_j, \lambda).
	\]
	In the same way
	\[
		\ssum_{1 \le i \le l} \Delta(V'_i, W'_i, \lambda) =
		\ssum_{1 \le j \le m} \Delta(V''_j, W''_j, \lambda).
	\]

	It is easy to see that \mref {Properties} {prop:cycle}--\ref {prop:loglike} are
	satisfied in all their generality. \end{proof}

\begin{rmk}\label{rmk:zero}
It is worthy to note that \mref {Properties} {prop:cycle} and \ref {prop:meslike}
entail formulas
\begin{gather*}
	\Delta(V,W,\lambda) = - \Delta(W,V,\lambda),\\
	\Delta(V,W,\lambda) = \Delta(V \setminus W, W \setminus V, \lambda);
\end{gather*}
moreover, according to \mref {Proposition} {prn:zero},
if $\mu(V \symdiff V') = \mu(W \symdiff W') = 0$, then
\[
	\Delta(V,W,\lambda) = \Delta(V',W',\lambda).
\]
\end{rmk}

	\section {The main result}

In this section we provide a description of continous additive partition entropies. However, no notion of continuity has been developed as yet. We would like our notion to be as weak as possible,  with many continous partition entropies. Yet, at the same time we want partition entropies $H_P$ to be continuous exactly when
the corresponding classical enropies $H$ are continuous.
\new{Here, a subtle distinction should be made. There are two natural definitions of continuity for a classical entropy, depending on whether we consider nonnegative or just positive probabilities. The same 
phenomenon arises in the context of partition entropies.}

We shall consider a pseudometric~$d_{\alg F}$ in algebra $\alg F$
defined by
\[
	d_{\alg F} (A,B) = \mu(A \symdiff B).
\]
\new{By analogy with the classical case we want to consider two topologies  in the family $\A$ both introduced as the richest topology so that, in the first case, the mappings ``similar to the following ones''
\[
	\{A \in \alg F : 0 \le \mu(A) \le 1 \}  \ni\; A
	\longmapsto \indalg{ A, \Omega \setminus A } \;\in \A
\]
and, in the second case, the following ones}
\[
	\{A \in \alg F : 0 < \mu(A) < 1 \}  \ni\; A
	\longmapsto \indalg{ A, \Omega \setminus A } \;\in \A
\]
are continuous.
\new{To do so we define the closed-domain topology in $\A$ by the following pseudometric:
\[
	d(\alg A, \alg B)
		:= \inf \, \{ \mu(Z) \colon \alg A \restr {\Omega \setminus Z}
		= \alg B \restr {\Omega \setminus Z} \},
\]
and the stronger open-domain topology by the following one}
\[
	D(\alg A, \alg B)
		:= \inf \, \{ \mu(Z) \colon \alg A \restr {\Omega \setminus Z}
		= \alg B \restr {\Omega \setminus Z} \}
			+ \abs{ N(\alg A) - N(\alg B) },
\]
where $N(\alg A)$ denotes the number of atoms of algebra~$\alg A$ with nonvanishing
measure.

\new{
		\begin{dfn}	\label{dfn:Icts}
	A function $I:\A \to \R$ is said to be \defem{closed-domain continuous} if it is continuous
	in metric~$d$ and \defem{open-domain continuous} or simply \defem{continuous} if it is
	continuous in metric~$D$.\end{dfn} }

\new{We consider closed-domain continuity to be too restrictive, with many entropies, like some $L_\m$ and Hartley entropy not being closed-domain continuous. Open-domain continuity on the other hand, plays nicely with our examples and our theory. To see that let us first make precise the following}

		\begin{dfn}		\label{dfn:abscont}
	A finitely-additive set function~$\m\colon\alg F \to \R$
	is said to be \defem{absolutely continuous} with respect to
	measure~$\mu$ $(\m \ll \mu)$ if it is continuous
	in pseudometric~$d_{\alg F}$, or equivalently if for any
	$\varepsilon > 0$ there is a
	$\delta > 0$ such that if we have $\mu(A) < \delta$ then we also have
	$\m(A) < \varepsilon$. \end{dfn}

Recall that in case of finitely-additive measures
the vanishing of $\m$ on sets of $\mu${\dash}measure~$0$ does not imply
the absolute continuity of~$\m$. \new{Now we have}

		\begin{rmk}		\label{rmk:examples}
	Additive partition entropy $L_\m$ from \mref{Example}{exm:H} is open-domain continuous,
	when $\m$ is absolutely continuous with respect to~$\mu$.
	(As it will follow from \mref {Theorem} {thm:cts} and \mref {Remark} {rmk:type}
	there are no other continuous additive partition entropies of this shape.) \end{rmk}

We shall show the continuity of $L_\m$ in case when $\m \ll \mu$.
Fix algebra $\alg A = \indalg{A_1, \dotsc, A_n}$ and $0 < \varepsilon < 1$.
There is a $0 < \delta < 1$ such that if
\[
	\mu(A_i \symdiff B) < \delta,\quad 1\le i\le n,
\] then
\[
	\abs{ \log \mu(A_i) - \log \mu(B) } < \varepsilon \quad\text{and}\quad
	\abs{ \m(A_i) -\m(B) } < \varepsilon.
\]
If now $d(\alg A, \alg B) < \delta < 1$ then the algebras $\alg A$ and $\alg B$
have the same number of atoms of nonzero measure. We can
assume that $\alg B = \indalg{B_1, \dotsc, B_n}$ and $\mu(A_i), \mu(B_i) > 0$,
and also that $\mu( A_i \symdiff B_i )  <  \delta$. Then
\begin{align*}
	\abs{ L_\m( \alg A ) - L_\m( \alg B ) }
	&=	\abs{ \sum \m(A_i) \log \tfrac 1{\mu(A_i)}
		- \sum \m(B_i) \log \tfrac 1{\mu(B_i)}	}	\\
	&\le	\sum \abs{ \m(A_i)
		\left( \log \tfrac 1{\mu(A_i)} - \log \tfrac 1{\mu(B_i)} \right)
			 }\ 	+\\
	&\qquad\qquad\qquad\qquad\qquad	 \sum \abs{ \big( \m(A_i) -\m(B_i) \big)
			\log \tfrac 1{\mu(B_i)} }	\\
	&<	\sum \abs{ \m(A_i) } \varepsilon +
		\sum \varepsilon \log \tfrac 1{\mu(B_i)}	\\
	&<	\varepsilon \sum \abs{ \m(A_i) }
		+ \varepsilon  \sum \left(\! \varepsilon
		+ \log \tfrac 1{\mu(A_i)} \!\right)	\\
	&<	\varepsilon \tim \mathrm{const}( \alg A ).	
\end{align*}

			\begin{rmk}	\label{rmk:seqentrcts}
	Given a 'classical' additive entropy $H$, the correspondig partition entropy $H_P$ is continuous if and only if $H$ is continous on the open domain,
	that is if each function
	\[
		H|_{\{(p_1,\ldots,p_n): p_i> 0, \sum p_i =1 \}}
	\]
	 is continuous.  \new{$H_P$ is closed-domain continuous iff $H$ is continuous everywhere.} It follows that, entropies of Examples
	\ref{exm:V}--\ref{exm:minmax} are continuous.\footnotemark\end{rmk}
\footnotetext{Please note that Hartley entropy, \new{and $L_\mathrm{min}$} are not closed-domain continuous.}

\new{It turns out that  in the derivation of our main theorem (\mref{Theorem} {thm:cts}) we can do with weaker concepts than that of open-domain continuity. In fact, we do not need the continuity on the full family ~$\A$. It will be sufficient to assume that $I$ is continuous on the family of algebras with $2$ atoms.
Let for that matter $\A_2$ denote the family
\[
	\A_2:=\{\indalg{A,B}: 0<\mu(A)<1\}.
\]}

\new{
		\begin{dfn}	\label{dfn:IctsA2}
	We say that $I$ is  \defem{continuous on $\A_2$} if the restriction $I_{\A_2}$ of $I$ to ${\A_2}$
	is continuous, %(or closed-domain continuous),
	that is if for any sequence of sets $A, A_1, A_2, \ldots \in \alg F$ with $0<\mu(A)<1$ such that
	$\mu(A \symdiff A_n) \tendsto 0$ we have
	\[
		I(\indalg{A_n, \Omega \setminus A_n}) \tendsto I(\indalg{A, \Omega \setminus A}).
	\] \end{dfn}
}

		\begin{prn}		\label{prn:delta}
	If the additive partition entropy~$I$ is continuous \new{on $\A_2$}  then for any sets
	$V, W$ such that $\mu(V) = \mu(W)$ there is
	a $\Delta(V,W)$ satisfying
	\[
		\Delta(V,W,\lambda) = \Delta(V,W) \tim \log \lambda \qquad
							(\lambda > 0).
	\]
	What is more $\Delta$ has the following properties:
	\begin{enumerate}
		\item		\label{prop:cycle2}
			For any sets $U$, $V$ and $W$ of the same measure
			\[
				\Delta(U,W) = \Delta(U,V) + \Delta(V,W).
			\]
		\item		\label{prop:meslike2}
			For $V = V_1 \dissum V_2$, $W = W_1 \dissum W_2$,
			$\mu(V_i) = \mu(W_i)$, $i = 1,2$ we have
			\[
				\Delta(V,W) = \Delta(V_1,W_1) + \Delta(V_2,W_2).
			\]
		\item		\label{prop:cont}
			$\Delta(\cdot, \cdot)$ is \new{uniformly} continuous
			(in the topology induced from the one in $\alg F$.)
	\end{enumerate} \end{prn}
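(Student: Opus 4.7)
The existence of $\Delta(V,W)$ with $\Delta(V,W,\lambda) = \Delta(V,W) \log \lambda$ reduces to a Cauchy functional equation. By Proposition~\ref{prn:key}(\ref{prop:loglike}), the function $\varphi(\lambda) := \Delta(V,W,\lambda)$ satisfies $\varphi(\kappa\lambda) = \varphi(\kappa) + \varphi(\lambda)$, so once $\varphi$ is shown continuous in $\lambda$ it must take the form $\Delta(V,W) \log \lambda$ for a unique constant $\Delta(V,W) := \varphi(2)/\log 2$. To establish continuity, I fix disjoint $V, W$ of small equal measure $\alpha$, set $K := \Omega \setminus (V \cup W)$, and, for each $\lambda$ with $\alpha < \min(\tfrac 1{1+\lambda}, \tfrac \lambda{1+\lambda})$, pick $C(\lambda) \subset K$ with $\mu(C(\lambda)) = \tfrac 1{1+\lambda} - \alpha$. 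Given $\lambda_n \to \lambda$, the Darboux property inside $C(\lambda)$ or inside $K \setminus C(\lambda)$ (according to whether $\mu(C(\lambda_n))$ is smaller or larger than $\mu(C(\lambda))$) lets us choose $C(\lambda_n)$ with $\mu(C(\lambda_n) \symdiff C(\lambda)) = |\mu(C(\lambda_n)) - \mu(C(\lambda))| \to 0$. The two-atom algebras $\alg A(\lambda) := \indalg{V \cup C(\lambda),\, W \cup (K \setminus C(\lambda))} \in \Fl{V}{W}{\lambda} \cap \A_2$ and $\T{V}{W}\alg A(\lambda)$ thus converge in the $D$-metric, and the assumed continuity of $I$ on $\A_2$ (Definition~\ref{dfn:IctsA2}) delivers continuity of $\varphi$.

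For general $V, W$ with $\mu(V) = \mu(W)$, I would extend $\Delta$ by partitioning into matching disjoint pairs of small equal measure, exactly as in part~\textbf{III} of the proof of Proposition~\ref{prn:key}, and setting $\Delta(V,W) := \sum_i \Delta(V_i, W_i)$; independence of the partition is verified via an independent refinement, as there. Properties~(\ref{prop:cycle2}) and~(\ref{prop:meslike2}) then reduce directly to Proposition~\ref{prn:key}'s properties~(\ref{prop:cycle}) and~(\ref{prop:meslike}) at $\lambda = 2$, divided by $\log 2$.

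The main obstacle is uniform continuity, property~(\ref{prop:cont}). In the favourable case $\mu(V) = \mu(W) = \mu(V') = \mu(W')$, properties~(\ref{prop:cycle2}) and~(\ref{prop:meslike2}) combine to give
\[
	\Delta(V,W) - \Delta(V',W') = \Delta(V \setminus V',\, V' \setminus V) - \Delta(W \setminus W',\, W' \setminus W),
\]
so it suffices to show that $|\Delta(X,Y)| \to 0$ uniformly as disjoint $X, Y$ of common measure $\alpha$ shrink. For this I would fix once and for all a reference set $C_0$ with $\mu(C_0) = 1/3$; given small disjoint $X, Y$ with $\mu(X) = \mu(Y) = \alpha$, the Darboux property yields $C_X$ of measure $1/3 - \alpha$ disjoint from $X \cup Y$ and contained in $C_0 \setminus (X \cup Y)$ as far as its size permits, the small remainder being taken from $\Omega \setminus (C_0 \cup X \cup Y)$. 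A direct computation shows that both $\alg A := \indalg{X \cup C_X,\, \Omega \setminus (X \cup C_X)} \in \Fl{X}{Y}{2}$ and $\T{X}{Y}\alg A$ lie within $D$-distance at most $2\alpha$ of the fixed algebra $\alg A_0 := \indalg{C_0, \Omega \setminus C_0}$, so continuity of $I$ at the single point $\alg A_0$ forces $|\Delta(X,Y)|$ to be uniformly small. The case in which the four measures do not all coincide is reduced to this by Darboux-perturbing $V, W$ to sets $\tilde V, \tilde W$ of measure $\mu(V')$ and peeling off $\Delta(\tilde V, \tilde W) - \Delta(V,W)$ as an instance of $\pm \Delta(X,Y)$ via Property~(\ref{prop:meslike2}); arranging this reduction cleanly, while keeping $\tilde V, \tilde W$ close to $V', W'$ in $d_{\alg F}$, is the delicate step I expect to require the most care.
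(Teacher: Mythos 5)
Your proposal is correct and follows essentially the same route as the paper: continuity of $\lambda\mapsto\Delta(V,W,\lambda)$ via Darboux-built two-atom algebras feeding the Cauchy equation, properties (1)--(2) inherited from Proposition~\ref{prn:key}, and uniform continuity reduced to smallness of $\Delta(X,Y)$ for small disjoint $X,Y$ by anchoring at a fixed $1/3$--$2/3$ partition and invoking continuity of $I$ at that single point (the paper's $\indalg{K_1,K_2}$ playing the role of your $\alg A_0$). The only differences are cosmetic --- you prove continuity in $\lambda$ everywhere rather than just at $\lambda=1$, and you spell out the unequal-measures reduction that the paper leaves implicit.
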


		\begin{proof}
	By $\Delta(V,W,\lambda) =
	\Delta(V \setminus W, W \setminus V, \lambda)$
	and \mref {property} {prop:meslike} of \mref {Proposition} {prn:key},
	in order to prove the first part of Proposition, we can assume that
	the sets $V$ and $W$ are disjoint and that their measures are smaller than~$1/2$.

	At first, we shall show that the mapping
	$\lambda \mapsto \Delta(V,W,\lambda)$ is continuous at $\lambda=1$.
	Indeed, consider the following open neighbourhood of the point~$1 \in \R$:
	$G := \{\lambda : \mu(V) < \varepsilon(\lambda) \}$.
	Consider also any sequence $(\lambda_i)_{i\ge 1}$ such that
	$\lambda_i \in G$ and $\lambda_i \tendsto 1$.
	By the \hyperref[rmk:assumption] {Darboux property}
	(see also \mref {Lemma} {lem:epsilon}A) we shall find
	algebras $\alg A_i = \indalg{A^i_1, A^i_2} \in \Fl V W {\lambda_i}$ and
	$\alg A = \indalg{A_1, A_2} \in \Fl V W 1$ such that
	$\mu({A^i_1 \symdiff A_1}) \tendsto 0$ when $i \tendsto \infty$.
	Then, from the continuity of~$I$ we get
	\begin{align*}
		I(\alg A_i) &\tendsto I(\alg A),	\\
		I(\T V W \alg A_i) &\tendsto I(\T V W \alg A),
	\end{align*}
	and consequently
	$\Delta(V,W,\lambda_i) \tendsto \Delta(V,W,1)$.

	Next by \mref {property} {prop:loglike} of \mref {Proposition} {prn:key} and the fact that $\Delta(V, W, \cdot)$ is continuous
	at $\lambda = 1$ we see that there is a constant $a \in \R$
	with
	\[
		\Delta(V, W, \lambda) = a \, \log \lambda \qquad (\lambda > 0).
	\]
	Put $\Delta(V, W) := a$.
	
	\mref {Property} {prop:cycle2} and \mref {property} {prop:meslike2}
	follow from the corresponding properties in \mref {Proposition} {prn:key}.

	As it follows  from \mref {property} {prop:meslike2}, \new{to get
	uniform continuity we only need to show that $\Delta$ is continuous} at
	$V = W = \emptyset$; i.e. it suffices to show that for any
	$\varepsilon > 0$ we can find $\delta >0$ such that if
	$\mu(V) = \mu(W) < \delta$ then $\abs {\Delta(V,W)} < \varepsilon$.

	Fix $\varepsilon >0$. Using the continuity of the function~$I$ at the point
	$\indalg{K_1,K_2}$, for any fixed partition $\mu(K_1) = 1/3$,
	$\mu(K_2) = 2/3$ of the space $\Omega$ we shall find $\delta > 0$ such that
	for any algebras $\indalg{L_1,L_2}$ which satisfy
	$\mu( K_1 \symdiff L_1 )  <  4\delta$ (and $\mu(L_1) = 1/3$)
	we have
	$\abs{ I(\indalg{K_1,K_2}) - I(\indalg{L_1,L_2}) }  <  \varepsilon/2$.
	
	Now, let the sets $V, W$ satisfy $\mu(V) = \mu(W) < \delta$.
	Put $V' = V \setminus W$, $W' = W \setminus V$.
	We can assume that 
	$\delta < 1/6$; then we can find sets $V''$ and $W''$ such that
	$V', W', V'', W''$ are disjoint and which satisfy the following equalities
	\begin{gather*}
		\mu(V' \cap K_1) = \mu(V'' \cap K_2),  \qquad
		\mu(V' \cap K_2) = \mu(V'' \cap K_1),  \\
		\mu(W' \cap K_1) = \mu(W'' \cap K_2),  \qquad
		\mu(W' \cap K_2) = \mu(W'' \cap K_1).
	\end{gather*}
	Set
	\begin{gather*}
		L_1 := (K_1 \cup V' \cup V'') \setminus (W' \cup W''),	\qquad
		L_2 := \Omega \setminus L_1,	\\
		L'_1 := L_1 \symdiff V' \symdiff W',	\qquad
		L'_2 :=\Omega \setminus L'_1.
	\end{gather*}
	Then
	\begin{gather*}
		\T {V'} {W'} \,\indalg{L_1, L_2} = \indalg{L'_1, L'_2},	\qquad
		\mu(L_1) = \mu(L'_1) = 1/3,	\\
		\mu( K_1 \symdiff L_1 )  =2\mu(V') <  2\delta,	\qquad
		\mu( K_1 \symdiff L'_1 ) \leq 4\mu(V') < 4\delta.
	\end{gather*}
	From this we get
	\[
		\abs{ \Delta(V,W) } = \abs{ \Delta(V',W',2) }
		= \abs{ I(\indalg{L'_1,L'_2}) - I(\indalg{L_1,L_2}) }
		< \varepsilon.
	\] \end{proof}
%	$\mu( K_1 \symdiff L_1 )  <  2\delta$ we have
%	$\abs{ I(\indalg{K_1,K_2}) - I(\indalg{L_1,L_2}) }  <  \varepsilon/2$.
%	
%	Let now sets $V, W$ satisfy $\mu(V) = \mu(W) < \delta$.
%	Denote $V' = V \setminus W$, $W' = W \setminus V$, and  also 
%	\begin{gather*}
%		L_1 := (K_1 \cup V') \setminus W',	\qquad
%		L_2 := (K_2 \cup W') \setminus V',\\
%		L'_1 := (K_1 \cup W') \setminus V',	\qquad
%		L'_2 := (K_2 \cup V') \setminus W'.
%	\end{gather*}
%	Then
%	\begin{gather*}
%		\T {V'} {W'} \,\indalg{L_1, L_2} = \indalg{L'_1, L'_2},\\
%		\mu( K_1 \symdiff L_1 )  <  2\delta,  \qquad
%		\mu( K_1 \symdiff L'_1 ) <  2\delta.
%	\end{gather*}
%	Stąd we get
%	\[
%		\abs{ \Delta(V,W) } = \abs{ \Delta(V',W',2) }
%		= \abs{ I(\indalg{L_1,L_2}) - I(\indalg{L'_1,L'_2}) }
%		< \varepsilon.
%	\] \end{proof}

	Recall that $\mu$ is defined on the $\sigma$-algebra~$\alg F$.
	Let $\alg R$ be the family of all sets in $\alg F$ whose measure is rational.
%	Let $\mathfrak R$ denote the family of algebras in $\A$  whose atoms have rational measures, 
%	$\alg A \in \mathfrak R$ if $\alg A =\indalg{A_1, \dotsc, A_n}$, $\mu(A_i) \in \Q$.
		\begin{lem}	\label{lem:keyrational}
	Let $\Delta(V,W) \in \R$ be defined for any pair of sets
	$V,W \in \alg R$ of the same measure~$\mu$ and let it satisfy the following conditions
	\begin{enumerate}
		\item
			For any sets $U$, $V$ and $W$ of the same measure
			\[
				\Delta(U,W) = \Delta(U,V) + \Delta(V,W).
			\]
		\item
			For $V = V_1 \dissum V_2$, $W = W_1 \dissum W_2$,
			$\mu(V_i) = \mu(W_i)$, $i = 1,2$ we have
			\[
				\Delta(V,W) = \Delta(V_1,W_1) + \Delta(V_2,W_2).
			\]
%		\item
%			Whenever $\mu(V \symdiff V') = \mu(W \symdiff W') = 0$ we have
%			\[
%				\Delta(V,W) = \Delta(V',W').
%			\]
	\end{enumerate}
	There is a unique finitely-additive set function~$\m\colon \alg R \to \R$ with
%	$\m(V) = 0$ whenever $\mu(V) = 0$
	$\m(\Omega)=0$ such that for any sets
	$V$ and $W$ with $V\cap W\in \alg R$ (i.e. belonging to the same
	algebra $\alg G \subset \alg R$) and satisfying $\mu(V) = \mu(W)$ we have
		\begin{equation}		\label{eqn:deltam}
		\Delta(V, W) = \m(W) - \m(V).
	\end{equation}
	What is more, for any $A \in \alg R$ we have
	\begin{equation}			\label {eqn:abscont}
		\abs{m(A)} \le \sup \abs{ \Delta(\cdot,A) }.
	\end{equation}

\end{lem}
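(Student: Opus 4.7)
The plan is to build $\m$ first on a nested chain of ``reference'' subalgebras and then extend by the natural formula $\m(W) := \m(B) + \Delta(B, W)$ for some reference set $B$ of the same measure as~$W$. Concretely, set $n_k := k!$ and use the \hyperref[rmk:assumption]{Darboux property} to pick, by induction on~$k$, nested partitions $\Omega = \Omega_1^{(n_k)} \dissum \dotsb \dissum \Omega_{n_k}^{(n_k)}$ into pieces of equal measure~$1/n_k$; let $\alg G_{n_k}$ denote the algebra they generate, so $\alg G_{n_k} \subset \alg G_{n_{k+1}}$ and every rational $p/q \in [0,1]$ is the measure of some union of atoms of~$\alg G_{n_k}$ as soon as $q \mid n_k$.

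On $\alg G_n$ (with $n = n_k$) I would first put
\[
	\m(\Omega_i^{(n)}) := \Delta(\Omega_1^{(n)}, \Omega_i^{(n)})
		- \tfrac 1 n \sum_{j=1}^n \Delta(\Omega_1^{(n)}, \Omega_j^{(n)})
\]
and extend by finite additivity. This gives $\m(\Omega) = 0$ automatically, and using cocycle~(a) to handle differences of atoms together with~(b) to match $V$ and~$W$ of equal measure by a bijection of atoms, one gets $\Delta(V,W) = \m(W) - \m(V)$ inside~$\alg G_n$. The one nontrivial algebraic step is checking that the definitions of~$\m$ on~$\alg G_{n_k}$ and on~$\alg G_{n_{k+1}}$ agree on their intersection; this is a direct double-sum manipulation combining~(a) applied to the new atom-pairs $\Delta(\Omega_1^{(n_{k+1})}, \cdot)$ with~(b) applied to the refinement of each old atom into $n_{k+1}/n_k$ new ones.

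For the extension and the main identity: given $W \in \alg R$ with $\mu(W) = p/q$, pick any~$k$ with $q \mid n_k$ and any $B \in \alg G_{n_k}$ with $\mu(B) = p/q$, and set $\m(W) := \m(B) + \Delta(B, W)$. Independence of the choices of~$B$ and~$k$ follows at once from~(a) together with the within-chain identity, and in fact the same computation shows that $\Delta(V,W) = \m(W) - \m(V)$ holds for \emph{every} pair of equal-measure sets in~$\alg R$, not only for those sharing a common subalgebra. Finite additivity of~$\m$ on~$\alg R$ reduces, after choosing a reference decomposition $A_0 \dissum B_0 = C_0 \in \alg G_{n_k}$ matching a given $A \dissum B = C$ in measure, to finite additivity of~$\m$ on~$\alg G_{n_k}$ together with~(b) applied to $\Delta(C_0, A \dissum B) = \Delta(A_0, A) + \Delta(B_0, B)$. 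Uniqueness is then automatic: any other admissible~$\m'$ makes $\m' - \m$ a finitely additive function on~$\alg R$ that depends only on~$\mu$ and vanishes at~$\Omega$, hence it is identically zero.

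The bound~\eqref{eqn:abscont} I would derive by a permutation-averaging trick. For $A \in \alg R$ with $\mu(A) = p/q$ choose a partition $\Omega = E_1 \dissum \dotsb \dissum E_q$ into pieces of measure~$1/q$ with $A = E_1 \cup \dotsb \cup E_p$, and for $\sigma \in S_q$ put $V_\sigma := E_{\sigma(1)} \cup \dotsb \cup E_{\sigma(p)}$. Since each~$E_j$ lies in~$V_\sigma$ for exactly $p \tim (q-1)!$ permutations and $\sum_j \m(E_j) = \m(\Omega) = 0$, one gets $\tfrac 1{q!} \sum_\sigma \m(V_\sigma) = 0$; hence
\[
	\m(A) \;=\; \frac 1{q!} \sum_{\sigma \in S_q}
		\bigl( \m(A) - \m(V_\sigma) \bigr)
	\;=\; \frac 1{q!} \sum_{\sigma \in S_q} \Delta(V_\sigma, A),
\]
from which~\eqref{eqn:abscont} is immediate. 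The chief obstacle in the whole plan is the compatibility computation between the nested $\alg G_{n_k}$'s; once that is in place, everything else reduces to routine applications of~(a), (b), and the \hyperref[rmk:assumption]{Darboux property}.
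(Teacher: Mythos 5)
Your construction is correct and rests on the same core idea as the paper's proof: averaging $\Delta$ against the pieces of an equal-measure partition. The paper \emph{defines} $\m(A_1\dissum\dotsb\dissum A_k)$ directly as the average of $\Delta(A_{i_1}\dissum\dotsb\dissum A_{i_k},\,A_1\dissum\dotsb\dissum A_k)$ over all $k$\dash element subsets of an equal-measure partition of $\Omega$ --- your permutation average at the end is exactly this quantity, each subset being counted $p!(q-p)!$ times, which is why \eqref{eqn:abscont} drops out identically for both of you --- and then proves well-definedness across two arbitrary equal-measure partitions by constructing a common refinement, which requires an auxiliary partition conditionally independent of both and is the most technical part of the paper's argument. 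You reorganise the bookkeeping: by anchoring everything to one nested chain $\alg G_{k!}$ you only need compatibility along the chain (a clean telescoping computation, which checks out), and the transport $\m(W):=\m(B)+\Delta(B,W)$ then uses the unrestricted cocycle identity (1) to reach arbitrary sets of $\alg R$. This buys a slightly stronger conclusion --- \eqref{eqn:deltam} for \emph{all} equal-measure pairs in $\alg R$, not only those lying in a common subalgebra --- and avoids the common-refinement construction entirely.

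There is one small hole to patch, in the uniqueness step. You assert that $\m'-\m$ ``depends only on $\mu$'', but a competing $\m'$ is constrained by \eqref{eqn:deltam} only on pairs $V,W$ with $V\cap W\in\alg R$, and two sets of the same rational measure need not have an intersection of rational measure. You need a connecting argument: by the Darboux property choose $U\in\alg R$ with $\mu(U)=\mu(V)=\mu(W)$ and $\indalg{U,\Omega\setminus U}$ independent of $\genalg(\{V,W\})$, so that the atoms of both $\genalg(\{U,V\})$ and $\genalg(\{U,W\})$ have rational measure; then $(\m'-\m)(V)=(\m'-\m)(U)=(\m'-\m)(W)$. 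After that, finite additivity and $(\m'-\m)(\Omega)=0$ force the resulting function of the rational measure to vanish by $\Q$\dash linearity, as you say. This is a one-line repair, not a flaw in the strategy.
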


		\begin{proof}
	Let $A_1 \dissum \dotsb \dissum A_n = \Omega$ with $\mu(A_i) = 1/n$.
	Write
	\begin{equation}		\label{eqn:mdef}
		\m(A_1 \dissum \dotsb \dissum A_k) := \frac 1 {\binom n k}
			\smashoperator[r]{
				\sum_{\substack{ \{ i_1, \dotsc, i_k\}\\
					\subset \{ 1, \dotsc, n \}	}  }
					}
			\;\Delta( A_{i_1} \dissum \dotsb \dissum A_{i_k}, 
				A_1 \dissum \dotsb \dissum A_k),
	\end{equation}
	(where by writing $\{ i_1, \dotsc, i_k\}$ we assume that these numbers
	are all distinct). Using \mref{property}{prop:meslike2} of \mref
	{Proposition} {prn:delta} we obtain
	\begin{equation}			\label{eqn:madd}
		\m(A_1 \dissum \dotsb \dissum A_k)
		= \frac 1 {k!\binom n k}\quad\ \;
			\smashoperator{
				\sum_{\substack{
						(i_1, \dotsc, i_k):	\\
						\{ i_1, \dotsc, i_k\}
						\subset \{ 1, \dotsc, n \}   \\
						   1 \le j \le k
				}	}
			}
			\;\Delta(A_{i_j}, A_j) = \tfrac 1 n
				\,\ssum_{\substack{
							1 \le i \le n	\\
							1 \le j \le k
				}		}
			\Delta(A_i, A_j).
	\end{equation}
	
	We show first that $\m$ is well defined. Indeed, if
	$\alg D = \indalg{D_1,\dotsc,D_{ns}}$, $\mu(D_p) = 1/(ns)$ and
	$\alg A=\indalg{A_1, \dotsc, A_n} \subset \alg D$ say $A_i = D_{(i-1)s+1} \dissum \dotsb
	\dissum D_{is}$, $1\le i\le n$ then
	\[
		\ssum_{\substack{(i-1)s+1 \le p \le is \\
				 (j-1)s+1 \le q \le js } } \Delta(D_p,D_q) =
		s\Delta(A_i,A_j).
	\]
	Hence
	\begin{equation}	\label{eqn:mAD}
		\tfrac 1 n
			\,\ssum_{\substack{ 1 \le i \le n	\\
						 1 \le j \le k	}  }
		\Delta(A_i, A_j) = \tfrac 1 {ns}
			\,\ssum_{\substack{ 1 \le p \le ns	\\
						 1 \le q \le ks	}  }
		\Delta(D_p, D_q).
	\end{equation}
	If we have now any other partition $B_1 \dissum \dotsb \dissum B_m
	= \Omega$, $\mu(B_j) = 1/m$ such that
	$A_1 \dissum \dotsb \dissum A_k = B_1 \dissum \dotsb \dissum B_l$ then
	there is a partition $C_1 \dissum \dotsb \dissum C_s$, $\mu(C_v) = 1/s$
	such that
	$C_1 \dissum \dotsb \dissum C_r = A_1 \dissum \dotsb \dissum A_k$
	and
	\begin{align*}
		\indalg{C_1, \dotsc, C_r}
			&\indep_{A_1 \dissum \dotsb \dissum A_k}
			\indalg{A_1, \dotsc, A_k}, \indalg{B_1, \dotsc, B_l}  \\
		\indalg{C_{r+1},\dotsc,C_s}
			&\indep_{A_{k+1} \dissum \dotsb \dissum A_n}
			\indalg{A_{k+1},\dotsc,A_n},\indalg{B_{l+1},\dotsc,B_m}.
	\end{align*}
	(Naturally, the symbol $\indep_K$ denotes the conditional independence
	of algebras in $\A \restr K$, with respect to the truncated conditional measure $\mu \restr K =
	\mu(\cdot)/\mu(K)$.) By the rationality of measure $\mu(D)$, where
	$D$ is any atom of the algebra $\alg A \tim \alg C$, we can find
	an algebra $\alg D \supset \alg A, \alg C$, all atoms of which
	$D_1, \dotsc, D_d$ are of the same measure. In particular, $d$ is a 
	multiple of both $n$ and $s$; moreover we can assume
	that for some~$e$ we have
	$D_1 \dissum \dotsb \dissum D_e = A_1 \dissum \dotsb \dissum A_k$.
	Observe now that \meqref {formula} {eqn:mAD} gives
	\[
		\tfrac 1 n
			\!\ssum_{\substack{ 1 \le i \le n	\\
					 1 \le j \le k	}  }
		\Delta(A_i, A_j) = \tfrac 1 d
			\!\ssum_{\substack{ 1 \le v \le d	\\
					 1 \le w \le e	}  }
		\Delta(D_v,D_w) = \tfrac 1 s
			\!\ssum_{\substack{ 1 \le h \le s	\\
					 1 \le k \le r	}  }
		\Delta(C_h,C_k).
	\]
	Similarly
	\[
		\tfrac 1 m
			\!\ssum_{\substack{ 1 \le v \le m	\\
					 1 \le w \le l	}  }
		\Delta(B_v,B_w) = \tfrac 1 s
			\!\ssum_{\substack{ 1 \le h \le s	\\
					 1 \le k \le r	}  }
		\Delta(C_h,C_k).
	\]
	This proves that $\m$ is well defined.
	
	The additivity of the mapping $\m$ follows from \meqref
	{equality} {eqn:madd}.
	What is more we have
	$\Delta(V,W) = \m(W) - \m(V)$ for $V \cap W \in \alg R$. Indeed, if $V=A_1 \dissum \dotsb \dissum A_k$ and
	$W=A_{\sigma(1)} \dissum \dotsb \dissum A_{\sigma(k)}$ where $\mu(A_1) = \ldots = \mu(A_n)$ and $\sigma(i) \in \{1,\ldots,n\}$ then
	\begin{align*}
		\Delta(V,W)
		&= \frac 1 {\binom n k}
			\smashoperator[r]{
				\sum_{\substack{ \{ i_1, \dotsc, i_k\}\\
					\subset \{ 1, \dotsc, n \}	}  }
					}
			\;\Delta( A_{1} \dissum \dotsb \dissum A_{k}, 
				A_{\sigma(1)} \dissum \dotsb \dissum A_{\sigma(k)})\\
		&=\frac 1 {\binom n k}
			\smashoperator[r]{
				\sum_{\substack{ \{ i_1, \dotsc, i_k\}\\
					\subset \{ 1, \dotsc, n \}	}  }
					}
			\;\Delta( A_{i_1} \dissum \dotsb \dissum A_{i_k}, 
				A_{\sigma(1)} \dissum \dotsb \dissum A_{\sigma(k)}) \ -\\ 
		&\qquad\qquad\qquad\qquad	\Delta( A_{i_1} \dissum \dotsb \dissum A_{i_k}, 
				A_{1} \dissum \dotsb \dissum A_{k}) \\
		&= \m(W) - \m(V).
	\end{align*} Definition \eqref{eqn:mdef} now gives $\m(\Omega)=0$. Conversely, if $\m$ satisfies $\m(\Omega)=0$
	and \eqref{eqn:deltam} it must satisfy \eqref{eqn:mdef} and so $\m$ is unique. 
	Inequality \eqref{eqn:abscont} follows from \meqref {definition} {eqn:mdef}. \end{proof}

		\begin{lem}	\label{lem:keycts}
	Let $\Delta(V,W) \in \R$ be defined for any pair of sets
	$V,W \in \alg F$ of the same measure~$\mu$ and let it satisfy \mref
	{conditions} {prop:cycle2}--\ref{prop:cont} of \mref {Proposition}
	{prn:delta}. There is a unique finitely-additive set function
	$\m\colon \alg F \to \R$, absolutely continuous with respect to measure~$\mu$
	such that $\m(\Omega) =0$ and such that for any sets $V$ and $W$ of the same
	measure~$\mu$ we have an equality
	\[
		\Delta(V, W) = \m(W) - \m(V).
	\]
	What is more, for any $A \in \alg F$ we have
	$\abs{m(A)} \le \sup \abs{ \Delta(\cdot,A) }$.	\end{lem}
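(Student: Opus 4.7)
The plan is to obtain $\m$ by first invoking Lemma \ref{lem:keyrational} on the restriction of $\Delta$ to pairs drawn from $\alg R$, and then extending the resulting set function from $\alg R$ to $\alg F$ by continuity. Concretely, Lemma \ref{lem:keyrational} supplies a unique finitely-additive $\m_0 \colon \alg R \to \R$ with $\m_0(\Omega)=0$, satisfying $\Delta(V,W)=\m_0(W)-\m_0(V)$ for $V,W \in \alg R$ of equal measure that lie in a common subalgebra of $\alg R$, and satisfying the bound $|\m_0(A)| \le \sup|\Delta(\cdot,A)|$.

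Next I would show that $\m_0$ is uniformly continuous on $\alg R$ in the pseudometric $d_{\alg F}$. The key ingredient is inequality \eqref{eqn:abscont} together with the uniform continuity of $\Delta$ at $(\emptyset,\emptyset)$ (property (\ref{prop:cont}) of Proposition \ref{prn:delta}): given $\varepsilon > 0$ there is $\delta > 0$ such that $\mu(V)=\mu(W)<\delta$ forces $|\Delta(V,W)|<\varepsilon$. Hence for any $A \in \alg R$ with $\mu(A) < \delta$ one has $|\m_0(A)| \le \varepsilon$. Finite additivity, via the splitting $\m_0(A)-\m_0(B)=\m_0(A\setminus B)-\m_0(B\setminus A)$, propagates this to $|\m_0(A)-\m_0(B)| \le 2\varepsilon$ whenever $d_{\alg F}(A,B)<\delta$; that is, absolute continuity of $\m_0$ on $\alg R$.

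With this in hand, I would extend $\m_0$ to $\alg F$: for $A \in \alg F$ pick a sequence $A_n \in \alg R$ with $\mu(A \symdiff A_n) \to 0$ (easily arranged by the Darboux property, adjusting a generic approximation by a small piece so that its measure becomes rational), define $\m(A) := \lim \m_0(A_n)$, and note that the limit exists and is independent of the chosen sequence by Step~2. Finite additivity of $\m$ follows by choosing disjoint approximations $A_n,B_n \in \alg R$ for disjoint $A,B$ (replace $B_n$ by $B_n \setminus A_n$ and then perturb to make the measure rational, using that $\mu(A_n \cap B_n) \to 0$ since $A \cap B = \emptyset$); $\m(\Omega)=0$ and $\m \ll \mu$ transfer immediately. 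The representation $\Delta(V,W)=\m(W)-\m(V)$ extends from $\alg R$ to $\alg F$ by jointly approximating $V,W$ by $V_n,W_n \in \alg R$ lying in a common subalgebra of $\alg R$ with $\mu(V_n)=\mu(W_n)$, and passing to the limit using continuity of $\Delta$ and of $\m$; the inequality $|\m(A)| \le \sup|\Delta(\cdot,A)|$ is obtained analogously. Uniqueness then follows from the uniqueness in Lemma \ref{lem:keyrational} combined with the continuity of $\m$.

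The main obstacle is Step~2—turning the a priori bound \eqref{eqn:abscont} into genuine absolute continuity of $\m_0$ on $\alg R$—which is precisely where property (\ref{prop:cont}) of Proposition \ref{prn:delta} is used. Once that is in hand, the remainder is a routine density extension, though some care is required to make sure that the approximations can be chosen so as to preserve the disjointness and common-subalgebra constraints needed for the various identities to pass to the limit.
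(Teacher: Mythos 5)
Your proposal is correct and follows essentially the same route as the paper: obtain $\m$ on $\alg R$ from Lemma~\ref{lem:keyrational}, use the bound \eqref{eqn:abscont} together with the uniform continuity of $\Delta$ (property~(\ref{prop:cont})) to see that $\m$ is uniformly $d_{\alg F}$-continuous on $\alg R$, extend to $\alg F$ by taking limits along approximating sequences of rational-measure sets, and transfer additivity, the identity $\Delta(V,W)=\m(W)-\m(V)$, the bound, and uniqueness by continuity. You are in fact somewhat more explicit than the paper about the density-extension details (disjoint approximations for additivity, common-subalgebra approximations for the representation formula), which the paper dispatches with a one-line remark.
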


		\begin{proof}
	We will extend $\m$ from \mref {Lemma}{lem:keyrational} onto the whole algebra~$\alg F$.

	\new{With the help of uniform continuity of $\Delta$ it is easy to show that for any measurable
	set $A$ the supremum of $\abs{ \Delta(\cdot,A) }$ is finite and that the
	mapping $A \mapsto \sup \abs{ \Delta(\cdot,A) }$ is continuous.}

	Let now $\alg R \ni A_i \tendsto A \in \alg F$, i.e. $\mu(A \symdiff A_i) \tendsto 0$.
	Then we have $\lim_{i,j \tendsto +\infty} \mu(A_i \setminus A_j) = 0$,
	and then by \eqref {eqn:abscont}
	$\lim_{i,j \tendsto +\infty} \m(A_i \setminus A_j) = 0$. Hence
	$\lim_{i,j \tendsto +\infty} \abs{\m(A_i) - \m(A_j)} = 0$, which means that
	the sequence $m(A_i)$ converges. Put $\m(A) := \lim \m(A_i)$. Clearly
	$\m(A)$ is well defined; (indeed, by taking any other sequence
	$B_i \tendsto A$, it suffices to consider the combined sequence
	$A_1, B_1, A_2, B_2, \dotsc \tendsto A$  thereby getting the convergence of
	$\m(A_1), \m(B_1), \m(A_2), \dotsc$).  We now get \meqref {inequality}
	{eqn:abscont} for any $A \in \alg F$; in particular it signifies the absolute
	 continuity of~$\m$ with respect to~$\mu$.
	The additivity of~$\m$, the equality $\Delta(V,W) = \m(W) -\m(V)$ and the uniqueness of $\m$
	follow from analogical properties obtained previously for rational values of~$\mu$.
	\end{proof}

Notice that the converse of the above lemma is trivially valid --- having~$\m$ as in
the thesis of Lemma, $\Delta$ satisfies \mref {conditions} {prop:cycle2}--\ref{prop:cont}
of \mref {Proposition} {prn:delta}.

We shall go over now to the crucial theorem of this paper.

		\begin{thm}		\label{thm:cts}
	Every additive partition entropy~$I$, continuous \new{on $\A_2$} has a unique decomposition
	into additive partition entropies:
	\[
		I = \widetilde I + L_\m,
	\]
	where $\widetilde I$ is an additive partition entropy depending solely on the measures
	of atoms, whereas~$L_\m$ is a continuous additive partition entropy from
	\mref{Example}{exm:H} for some finitely-additive set function~$\m$ absolutely
	continuous with respect to~$\mu$ such that $\m(\Omega) =0$.
	If $I$ is continuous, then so is $\widetilde I$.\end{thm}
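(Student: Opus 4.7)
The plan is to construct $\m$ via the apparatus already developed and then set $\widetilde I := I - L_\m$. Since $I$ is continuous on $\A_2$, \mref{Proposition}{prn:delta} supplies a function $\Delta(V,W)$ on equal-measure pairs that satisfies the cycle, measure-like additivity and uniform continuity properties. \mref{Lemma}{lem:keycts} then yields a unique finitely-additive $\m\colon \alg F \to \R$ with $\m \ll \mu$, $\m(\Omega)=0$ and $\Delta(V,W) = \m(W)-\m(V)$. Setting $\widetilde I := I - L_\m$, additivity is immediate from additivity of $I$ and $L_\m$. By \mref{Remark}{rmk:examples}, $L_\m$ is open-domain continuous, so if $I$ is continuous then so is $\widetilde I$.

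The heart of the argument is to show that $\widetilde I$ depends solely on the measures of atoms. The key is that $\widetilde I$ is invariant under every swap $\T{V}{W}$. Fix $\alg A \in \Fl{V}{W}{\lambda}$ with $V, W$ disjoint, $\mu(V) = \mu(W)$, $V \subset A_1$, $W \subset A_2$, $\mu(A_2)/\mu(A_1) = \lambda$. By \mref{Proposition}{prn:delta},
\[
I(\T{V}{W}\alg A) - I(\alg A) = \Delta(V,W,\lambda) = (\m(W) - \m(V))\log\lambda.
\]
On the other hand, because $\mu(V) = \mu(W)$ the atom measures are preserved under $\T{V}{W}$, and a short direct expansion gives
\[
L_\m(\T{V}{W}\alg A) - L_\m(\alg A) = (\m(W) - \m(V))\bigl(\log\tfrac{1}{\mu(A_1)} - \log\tfrac{1}{\mu(A_2)}\bigr) = (\m(W)-\m(V))\log\lambda.
\]
Subtracting, $\widetilde I(\T{V}{W}\alg A) = \widetilde I(\alg A)$. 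Moreover, \mref{Proposition}{prn:equal} shows $I$ is unchanged under permutations of atoms of equal measure, and the same plainly holds for $L_\m$, so for $\widetilde I$.

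Now, given $\alg A = \indalg{A_1, \dotsc, A_n}$ and $\alg B = \indalg{B_1, \dotsc, B_n}$ with $\mu(A_i) = \mu(B_i)$, I would connect them by successive swaps as follows. Since $\mu(A_1 \setminus B_1) = \sum_{k \ge 2} \mu(A_k \cap B_1)$, the Darboux property lets us partition $A_1 \setminus B_1 = V_2 + \dotsb + V_n$ with $\mu(V_k) = \mu(A_k \cap B_1)$. Applying $\T{V_k}{A_k \cap B_1}$ successively for $k = 2, \dotsc, n$ turns the first atom into $B_1$, while each remaining atom retains its measure. Passing to $(\Omega \setminus B_1, \alg F\restr{\Omega \setminus B_1}, \mu\restr{\Omega \setminus B_1})$ — which still satisfies the Darboux property — and iterating, one reaches $\alg B$ in at most $n-1$ stages; by swap-invariance $\widetilde I(\alg A) = \widetilde I(\alg B)$.

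Uniqueness is then routine: if $I = \widetilde I_1 + L_{\m_1} = \widetilde I_2 + L_{\m_2}$ with both $\m_i(\Omega) = 0$ and both $\widetilde I_i$ depending only on atom measures, then $L_{\m_1 - \m_2} = \widetilde I_2 - \widetilde I_1$ depends only on atom measures, so \mref{Remark}{rmk:type} forces $\m_1 - \m_2 = \alpha\mu$ for some $\alpha \in \R$; the condition $(\m_1 - \m_2)(\Omega) = 0$ then forces $\alpha = 0$. The chief obstacle is the inductive connecting argument — one has to verify that each swap $\T{V_k}{A_k \cap B_1}$ is well-defined in the currently evolving algebra (its two arguments lie in distinct atoms of equal pair-measure), that the targeted atom measures persist through the sequence, and that the restriction to $\Omega \setminus B_1$ genuinely reduces to the inductive case.
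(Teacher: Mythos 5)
Your proposal is correct and follows the paper's own strategy almost exactly: $\m$ is obtained from \mref{Proposition}{prn:delta} via \mref{Lemma}{lem:keycts}, $\widetilde I := I - L_\m$ is shown to be invariant under every measure-preserving swap $\T VW$ by the same two-line computation, and uniqueness is read off from \mref{Remark}{rmk:type}. The single point of divergence is the connecting argument --- the step you rightly single out as the chief obstacle. The paper dispatches it with an independence trick: by \mref{Corollary}{crr:indep} (see \mref{Remark}{rmk:indep}) one may assume $\alg A \indep \alg B$, whereupon the cells $K_{i,j} := A_i \cap B_j$ automatically satisfy $\mu(K_{i,j}) = \mu(A_i)\mu(B_j) = \mu(K_{j,i})$ and lie in distinct atoms for $i \neq j$, so that $\alg B = \T{K_{n-1,n}}{K_{n,n-1}} \dotsm \T{K_{1,2}}{K_{2,1}} \alg A$ with essentially no bookkeeping about how the algebra evolves along the way. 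Your greedy Darboux-based induction (rebuild $B_1$ first, then recurse on $\Omega \setminus B_1$) also works: each $V_k \subset A_1 \setminus B_1$ is still contained in the current first atom when its turn comes because the $V_k$ are pairwise disjoint and only $V_2, \dotsc, V_{k-1}$ have been removed; every swap preserves all atom measures since $\mu(V_k) = \mu(A_k \cap B_1)$; and the truncated space inherits the \hyperref[rmk:assumption]{Darboux property}, so the induction closes. The difference is only that your route obliges you to carry out exactly the verifications you list at the end, whereas the paper's independence device makes them automatic at the cost of invoking one more corollary.
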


		\begin{proof}
	Let $\m$ will be as in \mref {Lemma} {lem:keycts} for
	$\Delta$ of \mref {Proposition} {prn:delta}.
	Observe at first that the additive partition entropy $\widetilde I$ defined by
	\[
		\widetilde I (\alg A) := I(\alg A) - L_\m(\alg A) =
		 I(\alg A) - \left[ \m(A_1) \log\tfrac 1{\mu(A_1)} + \dotsb
		+ \m(A_n) \log\tfrac 1{\mu(A_n)} \right],
	\]
	where $\alg A = \indalg{A_1, \dotsc, A_n}$, is invariant in the following sense
	\[
		\widetilde I (\T V W \alg A) = \widetilde I (\alg A) \qquad
			\text{for $V, W$ such that $\alg A \in \F V W$,
							$\mu(V) = \mu(W)$.}
	\]
	Indeed, after assuming that $V \subset A_1$, $W \subset A_2$, setting
	\[
		B_i := A_i \symdiff V \symdiff W \in \T V W \alg A,\quad i = 1,2
	\]
	and using the equalities
	\[
		\Delta(V,W) = \m(W) - \m(V) =
		\m(B_1) - \m(A_1) = \m(A_2) -\m(B_2)
	\]
	we obtain
	\begin{align*}
		I( \T V W \alg A ) - I( \alg A )
		&= \Delta(V, W) \tim \log \tfrac{\mu(A_2)}{\mu(A_1)}	\\
		&= \left( \m(B_1) - \m(A_1) \right) \log\tfrac 1{\mu(A_1)}
                   - \left( \m(A_2) - \m(B_2) \right) \log\tfrac 1{\mu(A_2)} \\
		&=  \left[ \begin{array}{l}
			\phantom{+\ } \m(B_1) \log\tfrac 1{\mu(B_1)}	\\
			+\ \m(B_2) \log\tfrac 1{\mu(B_2)} \end{array} \right]
		   - \left[ \begin{array}{l}
			\phantom{+\ } \m(A_1) \log\tfrac 1{\mu(A_1)}	\\
			+\ \m(A_2) \log\tfrac 1{\mu(A_2)} \end{array} \right] \\
		&= L_\m( \T V W \alg A ) - L_\m( \alg A ).
	\end{align*} 

	It remains to show that by a sequence of operations of shape $\T V W$,
	like above, we can go from any algebra~$\alg A$ to any other
	algebra~$\alg B$, with the same measures of atoms as~$\alg A$.
	Let $\alg A = \indalg{A_1, \dotsc, A_n}$
	and $\alg B = \indalg{B_1, \dotsc, B_n}$, $\mu(A_i) = \mu(B_i)$. According
	to~\mref{Corollary}{crr:indep} (see \mref {Remark} {rmk:indep})
	we may assume that $\alg A \indep \alg B$.
	Then the sets $K_{i,j} := A_i \cap B_j$ are disjoint and
	$\mu(K_{i,j}) = \mu(K_{j,i})$; moreover for $i \neq j$ the sets
	$K_{i,j}$ and $K_{j,i}$ are contained in distinct atoms
	$A_i$, $A_j$ of the partition $\{ A_1, \dotsc, A_n \}$.
	It is easy to see that we have the following equality:
	\[
		\alg B = \T {K_{n-1,n}}{K_{n,n-1}} \dotsi
		\T {K_{1,3}}{K_{3,1}} \T {K_{1,2}}{K_{2,1}} \alg A.
	\]

	The uniqueness of the decomposition follows from \mvref {Remark} {rmk:type}.
	The continuity of the additive partition entropy $L_\m$  is described in \mref {Remark}
	{rmk:examples}. \end{proof}

\begin{ver:old}

We shall interpret now \meqref {inequality} {eqn:abscont} as an estimate of the measure~$\m$
with respect to some modulus of continuity of~$I$.

For any continuous function~$\Delta$ satisfying \mref {conditions}
{prop:cycle2}--\ref{prop:cont} of \mref {Proposition} {prn:delta}
the continuity of~$\Delta$ implies the finiteness and the continuity at $0$ of
the following modulus of continuity
\[
	\omega_\Delta(A) := \sup \abs{ \Delta(\cdot,A) }.
\]
It has  the following properties, entailed by the properties of~$\Delta$:
\begin{align}
	\omega_\Delta(\Omega \setminus A) &= \omega_\Delta(A)
				& &(A \in \alg F), \notag \\
	\omega_\Delta(A) &\le 2 \omega_\Delta(B) 	& &(\mu(A) = \mu(B)).
		\label{eqn:omegaself}
\end{align}

Observe that whenever $\Delta$ is derived from the continuous additive partition entropy~$I$
its modulus~$\omega_\Delta$ can also be considered as a kind of a modulus of continuity for
the function~$I$.
Indeed if $\mu(A) < 1/2$, we have
\begin{equation}		\label{eqn:omegaeq}
	\omega_\Delta(A)  \ =\  \smashoperator{
			\sup_{ \substack{\mu(V) = \mu(W) > 0 \\
			V \subset A, \alg A \in \F VW} } }
			\qquad \tfrac {\abs{ I(\T VW \alg A) - I(\alg A) }}
			{\log \left(\!  \tfrac {\mu(A_W)}{\mu(A_V)}  \!\right)},
\end{equation}
where $A_V$, $A_W$ are the atoms of the algebra $\alg A$
which contain respectively $V$ and $W$.
What is more, for any $\alg A \in \F VW$ we have
\begin{equation}		\label{eqn:deq}
	d(\T VW \alg A,\alg A) =\mu(V) + \mu(W).
\end{equation}

We also can define the following modulus of continuity of the function~$I$ along the
family of algebras with the same measures of atoms
\[
	\widetilde \omega_I  (\alg A,x) :=
		\smashoperator{
		\sup_{
		\substack{
			d(\alg A,\alg B) \le x \\
			\alg A \simeq \alg B
		}}}
		\abs{ I(\alg B) - I(\alg A) },
\]
where $\alg A \simeq \alg B$ denotes that these algebras have the same measures of atoms.
Put
\[
	c(\alg A) := \sup_{A, B}
		\tfrac 1 { \log \left( \tfrac {\mu(A)} {\mu(B)} \right) },
\]
where  $A$, $B$ are distinct atoms of $\alg A$ of nonzero measure~$\mu$.
Then, using \eqref {eqn:omegaself}, \eqref {eqn:omegaeq}
and~\eqref {eqn:deq} we get the following strenghening of the condition of the absolute
continuity of the measure~$\m$:

		\begin{prn}
	For any set $A \in \alg F$ and any algebra $\alg A$
	containing two atoms with their measures equal at least to~$\mu(A)$
	we have
	\[
		\abs{ \m(A) } \le \omega_\Delta(A)
		\le 2 c(\alg A) \widetilde \omega_I(\alg A, 2\mu(A)).
	\]	\end{prn}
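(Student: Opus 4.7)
The first inequality $\abs{\m(A)} \le \omega_\Delta(A)$ is immediate: \mref{Lemma}{lem:keycts} states $\abs{\m(A)} \le \sup \abs{\Delta(\cdot, A)}$, and the right-hand side is by definition $\omega_\Delta(A)$.

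My plan for the second inequality is to reduce $\omega_\Delta(A)$ to $\omega_\Delta(B)$ for a set $B$ sitting inside one of the large atoms of $\alg A$, and then to bound $\omega_\Delta(B)$ via the reformulation \eqref{eqn:omegaeq} by choosing $\alg A$ itself as the test algebra. Let $A_1, A_2$ be two atoms of $\alg A$ with measures at least $\mu(A)$. By the Darboux property there is $B \subset A_1$ with $\mu(B) = \mu(A)$, and \eqref{eqn:omegaself} yields $\omega_\Delta(A) \le 2\omega_\Delta(B)$. It therefore suffices to show $\omega_\Delta(B) \le c(\alg A)\,\widetilde\omega_I(\alg A, 2\mu(A))$. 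Invoking \eqref{eqn:omegaeq}, the summand at any admissible triple $(V, W, \alg A')$ equals $\abs{\Delta(V, W)}$ and does not depend on the choice of $\alg A'\in\F VW$. When $V \subset B \subset A_1$ and $W$ is contained in a single atom $A_j$ with $j\neq 1$, I select $\alg A' = \alg A$; then $\T VW \alg A$ has the same measures of atoms as $\alg A$ and $d(\T VW \alg A, \alg A) = 2\mu(V) \le 2\mu(A)$, giving
\[
\abs{\Delta(V, W)} = \frac{\abs{I(\T VW\alg A) - I(\alg A)}}{\abs{\log(\mu(A_j)/\mu(A_1))}} \le c(\alg A)\,\widetilde\omega_I(\alg A, 2\mu(A)).
\]

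The main obstacle is the general case, in which $W$ need not be contained in a single atom distinct from $A_1$. I handle it by splitting $W = \bigdissum_j W_j$ with $W_j \subset A_j$, splitting $V$ into matching equal-measure pieces $V_j \subset B$ by Darboux, and applying \mref{property}{prop:meslike2} of \mref{Proposition}{prn:delta} to write $\Delta(V, W) = \sum_j \Delta(V_j, W_j)$. Pairs with $j \ne 1$ are controlled by the displayed estimate above, while the residual pair $(V_1, W_1) \subset A_1$ (both in the same atom) is controlled by a detour through an auxiliary $U \subset A_2$ of the same measure, via \mref{property}{prop:cycle2}. The delicate point is to avoid a factor proportional to the number of atoms in the sum: I plan to package all the inter-atomic swaps into a single transformation $\T {V^*}{W^*}\alg A$ (with $V^* = V\setminus V_1$, $W^* = W\setminus W_1$), whose $d$-distance from $\alg A$ is still bounded by $2\mu(A)$, so that one application of $\widetilde\omega_I(\alg A, 2\mu(A))$ covers them collectively. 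Combining with the factor~$2$ from \eqref{eqn:omegaself} and the detour for the residual pair produces the stated constant $2c(\alg A)\,\widetilde\omega_I(\alg A, 2\mu(A))$.
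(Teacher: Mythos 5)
Your first inequality is fine, and the opening reduction for the second one --- picking $B\subset A_1$ with $\mu(B)=\mu(A)$, paying the factor $2$ through $\omega_\Delta(A)\le 2\omega_\Delta(B)$ via \eqref{eqn:omegaself}, and then estimating a single swap $\abs{\Delta(V,W)}$ with $V\subset A_1$, $W\subset A_j$ by $c(\alg A)\,\widetilde\omega_I(\alg A,2\mu(A))$ --- is sound. The gap is exactly at the point you yourself flag as ``delicate'' and then do not carry out: the packaging of the inter-atomic swaps into a single transformation $\T{V^*}{W^*}\alg A$. First, this object is not defined: $\T VW$ only makes sense for an algebra in $\F VW$, which forces $W$ to lie inside a single atom, whereas your $W^*=\bigdissum_{j\ne 1}W_j$ straddles several atoms. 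Second, if you replace it by the legitimate composite $\alg B=\T{V_2}{W_2}\dotsm\T{V_n}{W_n}\alg A$ (which does satisfy $\alg B\simeq\alg A$ and $d(\alg A,\alg B)\le 2\mu(A)$), the quantity you control is
\[
	I(\alg B)-I(\alg A)=\sum_{j\ne 1}\Delta(V_j,W_j)\,\log\tfrac{\mu(A_j)}{\mu(A_1)},
\]
a weighted sum whose weights $\log(\mu(A_j)/\mu(A_1))$ differ in magnitude and, in general, in sign across $j$. A bound on this weighted sum does not yield a bound on $\abs{\sum_{j\ne 1}\Delta(V_j,W_j)}$ at the cost of a single factor $c(\alg A)$; and if you retreat to estimating term by term, with one algebra $\T{V_j}{W_j}\alg A$ per $j$, you reintroduce a factor equal to the number of atoms of $\alg A$, which the stated constant does not permit.

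There is also a bookkeeping problem with the constant: even granting the packaging, the residual pair $(V_1,W_1)$ inside $A_1$ costs two further applications of the basic swap estimate (a detour out to $A_2$ and back), so your scheme delivers at best something like $6\,c(\alg A)\,\widetilde\omega_I(\alg A,2\mu(A))$ rather than $2\,c(\alg A)\,\widetilde\omega_I(\alg A,2\mu(A))$. For comparison, the paper offers no argument at all here --- the Proposition is merely asserted to follow from \eqref{eqn:omegaself}, \eqref{eqn:omegaeq} and \eqref{eqn:deq} --- so I cannot say you diverge from its route; but as written your proof does not close, and the case of $W$ spread over several atoms is precisely where a genuine idea is still missing.
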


Define the following pseudonorms in the space of $\alg F$-measurable
simple functions.
\[
	\norm {\phi}_s  :=	\left (\int \abs{\phi}^s \right)^{\frac 1s}
						\qquad (s \ge 1).
\]
Define also the following family of pseudometrics in~$\A$
\[
	\rho_s(\alg A, \alg B) :=\norm{ L(\alg A) - L(\alg B) }_s.
\]
		\begin{rmk}
	Consider any $1\le s < \infty$ and assume that the topology
	in $\Img L$ comes from the norm~$\norm{\cdot}_s$.
	Then the mapping $L\colon \A \to \Img L$ is continuous.
	In particular, for any additive partition entropy~$I$ if
	the mapping $J\colon \Img L \to \R$, defined by
	$I = J \circ L$, is continuous then the additive partition entropy~$I$ is
	continuous as well.\end{rmk}

Indeed, fix $\alg A = \indalg{A_1, \dotsc, A_n}$ and
$0 < \varepsilon < 1$.
Put $M := \max \{\mu(A_i)\}$, $m := \min \{\mu(A_i) \neq 0\}$.
There is $0 < \delta < \min \{ \varepsilon, m/2 \}$ such that
whenever $\mu(A \symdiff B) < \delta$ then also
$\abs{ \log \mu(A) - \log \mu(B) } < \varepsilon$.
If now $d(\alg A, \alg B) < \delta < 1$ then the algebras $\alg A$ and $\alg B$
have the same number of atoms of nonzero measure. We can
suppose that $\alg B = \indalg{B_1, \dotsc, B_n}$ and $\mu(A_i), \mu(B_i) > 0$,
and also that $\mu( A_i \symdiff B_i )  <  \delta$. Then
\begin{align*}
	\norm{L(\alg A) - L(\alg B)}_s^s
	&	= \sum_{i,j} \mu(A_i \cap B_j) \abs{ \log \tfrac 1{\mu(A_i)}
			- \log \tfrac 1{\mu(B_j)} }^s	\\
	&	<  \sum_i \mu(A_i \cap B_i) \varepsilon^s
		+ \sum_{i \neq j} \delta \abs{ \log \tfrac 1{\mu(A_i)}
			- \log \tfrac 1{\mu(B_j)} }^s	\\
	&	< nM\varepsilon + (n^2-n)\varepsilon
			\left( \log \tfrac 1m + \log \tfrac 2m \right)^s \\
	&	= \varepsilon \tim \mathrm{const}( \alg A ).
\end{align*}

		\begin{crr}		\label{crr:cty}
	Every additive partition entropy~$I$ continuous with respect to any
	pseudometric~$\rho_s$ ($1 \le s < \infty$) is continuous.\end{crr}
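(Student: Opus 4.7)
The plan is to leverage the factorization of every additive partition entropy through the operator~$L$. By \mref{Propositions}{prn:zero} and~\ref{prn:equal} (as pointed out in \mref{Remark}{rmk:H}), the value $I(\alg A)$ depends only on $L(\alg A)$, so $I$ descends to a well-defined function $J\colon \Img L \to \R$ with $I = J \circ L$. This factorization requires only additivity of~$I$, no continuity.

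Next, I would reinterpret the hypothesis. By definition $\rho_s(\alg A, \alg B) = \norm{L(\alg A) - L(\alg B)}_s$, so the pseudometric $\rho_s$ on $\A$ is exactly the pull-back via~$L$ of the pseudonorm $\norm{\cdot}_s$ restricted to $\Img L$. Continuity of $I$ with respect to $\rho_s$ is therefore literally the same as continuity of $J$ on $(\Img L, \norm{\cdot}_s)$.

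Finally, I would invoke the preceding Remark. That Remark shows $L\colon \A \to \Img L$ is continuous whenever $\Img L$ is endowed with $\norm{\cdot}_s$, and it explicitly records the relevant consequence for additive partition entropies: if the $J$ associated to $I$ is continuous on $(\Img L, \norm{\cdot}_s)$, then $I = J \circ L$ is continuous on~$\A$. Feeding in the reformulation of the hypothesis yields the corollary immediately.

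I do not expect a genuine obstacle: the substance of the argument — the estimate of $\norm{L(\alg A) - L(\alg B)}_s$ by a constant depending on $\alg A$ times $\varepsilon$, valid once $d(\alg A, \alg B)$ is small enough — has already been carried out in the preceding Remark, so the $\rho_s$-topology on $\A$ is coarser than its intrinsic topology and continuity passes the right way. The corollary is essentially a transcription of the concluding sentence of that Remark in which the auxiliary function $J$ has been eliminated in favour of the more natural $\rho_s$-continuity of~$I$. The only point worth spelling out explicitly is the well-definedness of $J$, which follows at once from the cited propositions.
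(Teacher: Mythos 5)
Your proposal is correct and follows essentially the same route as the paper: the corollary is intended as a direct consequence of the preceding Remark, whose displayed computation shows $L\colon \A \to (\Img L, \norm{\cdot}_s)$ is continuous and which already records that continuity of the induced $J$ (well-defined by \mref{Remark}{rmk:H}) forces continuity of $I = J\circ L$. Your added observation that $\rho_s$-continuity of $I$ is literally $\norm{\cdot}_s$-continuity of $J$ on $\Img L$ is exactly the intended (and only) missing link.
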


Observe, in addition, that by the power mean inequality, when $s>r$ the topology
generated by the pseudometric $\rho_s$ is
richer from the one generated by $\rho_r$. In particular additive partition entropies
continuous with respect to $\rho_r$ are continuous with respect to $\rho_s$.

%		\begin{crr}
%	The additive partition entropy $L_\m$ from \mref{Example}{exm:H} is continuous with respect to
%	the pseudometric~$\rho_s$,
%	for $\m$ absolutely continuouswith respect to~$\mu$. \end{crr}
%
%This corrolary denotes in particular that in \mref {Theoremiu} {thm:cts}
%we can change continuity (with respect to pseudometric~$d$) with the continuity
%with respect to~$\rho_s$.

\end{ver:old}

	\section {Final remarks}

At first, let us mention some other possible approaches one might take to derive our results. The first prototype of the proof of \mref {Lemma} {lem:keycts}
was based around defining $\widehat \m(\charf V - \charf W) := \Delta(V,W)$,
extending $\widehat \m$ to the linear space generated by
functions of shape~$\charf V$, observing that $\widehat \m$ is a continous linear mapping,
and then extending $\widehat \m$ to the space of all continuous
functions and finally defining $\m(V) :=\widehat \m (\charf V)$.

One could try to prove \mref {Theorems} {thm:cts} 
by elaborating on \mref {Remark} {rmk:H} and a version of Proposition \ref
{prn:same}. Although unexplored, it seems that such an approach would suffer
from some problems of its own.% (cf. \mref {Corollary} {crr:cty}).

\begin{ver:old}
This section contains some simple and known properties of measure. Their
inclusion will have little impact on the length of the paper.
\end{ver:old}
\begin{ver:new}
We would like now to convert our theorem to the typical case of a countably-additive probability. To do so, we need the following simple and known properties of measure. The proofs are omitted.
\end{ver:new}

		\begin{rmk}
	If $\mu$ is a countably-additive measure on $\sigma${\dash}algebra $\alg F$
	then every finitely-additive set function~$\m$ absolutely continuous
	with respect to~$\mu$
	is countably-additive.
	\end{rmk}

\begin{ver:old}
		\begin{proof}
	In fact, consider any sequence of disjoint sets
	$A_i \in \alg F$. Using the convergence of~$\mu(\bigdissum_{i>n} A_i) =
	\sum_{i>n} \mu(A_i) \tendsto 0$ we get
	\[
		\abs[\bigg]{ \m\Big(\bigdissum_{i \ge 1} A_i\Big) - \sum_{i \le n} \m(A_i) }
		= \abs[\bigg]{ \m\Big(\bigdissum_{i>n} A_i\Big) } \tendsto 0.
	\]	\end{proof}
\end{ver:old}

		\begin{rmk}		\label{rmk:darbatom}
	Every nonatomic probabilty space satisfies the
	\hyperref[rmk:assumption] {Darboux condition}. \end{rmk}

\begin{ver:old}
		\begin{proof}
	The assumption means that for any $A$, $P(A) > 0$ we can find
	$B$, $C$ such that $P(B),P(C) > 0$ and $A = B \dissum C$.

	Set $\eta := \inf_{A : P(A) \ge 1/2} P(A)$.
	Observe that for any $\varepsilon > 0$ and a measurable set $B$
	there is a set~$A\subset B$ such that $0< P(A) \le \varepsilon $. Hence
	if the infimum in the definition of~$\eta$ is attained
	it is~$1/2$.

	Assume that $\eta \ge 3/4$. Then $\eta$  is not attained
	and there exists a sequence of sets $A_1, A_2, \dotsc$ with measures greater than $\eta$
	and tending to~$\eta$.
	Now, set $B_n := A_1 \cap \dotsb \cap A_n$. A simple induction shows that
	$P(B_n) > 1/2$ and so $P(B_n) > \eta$. What is more,
	$B_n \supset B_{n+1}$ and $P(B_n) < P(A_n)$ i.e.
	$P(B_n) \tendsto \eta$. Then $P(\bigcap B_n) = \eta$
	--- a contradiction. Thus $1/2 \le \eta < 3/4$. Since we have the same fact
	for every subset $A \subset \Omega$,
	with $P(A) > 0$ as a conditional probability space,
	then we can divide every such set~$A$  into two parts such that
	the measure of every of them is over $4/3$ times smaller than the
	measure of~$A$. Because of $(3/4)^n \tendsto 0$ and using increasingly
	fine divisions we can find for any~$0 < \theta < P(A) $ an ascending
	sequence of sets $C_n$ with their measures tending to~$\theta$.
	Their union satisfies $P(\bigcup C_n) = \theta$.	\end{proof}
\end{ver:old}

In view of these remarks, we can recast \mref{Theorem} {thm:cts} in the setting of
a nonatomic probability space $(\Omega, \alg \Sigma, P)$.

		\begin{thm}		\label{thm:probmain}
	Let $I$ be a additive partition entropy, continuous \new{on $\A_2$ (c.f. \mref{Definition} {dfn:Icts},
	and \mref{Definition} {dfn:IctsA2})}. There exists an additive entropy, and a countably-additive
	set funcion~$\m$ absolutely continuous with respect to probability~$P$ such that
	\[
		I=H_P + L_\m,
	\]
	There is only one such pair with $\m(\Omega)=0$.	$L_\m$ is continuous. If $I$ is continuous then also $H_P$ is continuous, i.e. $H$ is continuous in the open domain. (c.f. \mref{Remark}{rmk:seqentrcts})\end{thm}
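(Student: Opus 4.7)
The plan is to obtain \mref{Theorem}{thm:probmain} as a direct corollary of \mref{Theorem}{thm:cts} by translating between the finitely-additive setting of Sections~2--4 and the countably-additive probability setting, and between additive partition entropies that depend solely on the measures of atoms and classical additive entropies~$H$.

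First I would check that the hypotheses of \mref{Theorem}{thm:cts} are met: by \mref{Remark}{rmk:darbatom}, a nonatomic countably-additive probability space $(\Omega,\alg\Sigma,P)$ satisfies the Darboux property, and so the entire theory of Sections~2--4 applies verbatim with $\mu:=P$ and $\alg F:=\alg\Sigma$. Invoking \mref{Theorem}{thm:cts} immediately supplies a unique decomposition $I=\widetilde I + L_\m$, where $\widetilde I$ depends solely on measures of atoms, $\m$ is a finitely-additive set function on $\alg\Sigma$, absolutely continuous with respect to~$P$, and $\m(\Omega)=0$.

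Next I would upgrade $\m$ from finitely-additive to countably-additive. This is exactly the content of the first remark of this section, which says that absolute continuity with respect to a countably-additive measure forces countable additivity; so no new argument is required. To extract the classical entropy $H$ from $\widetilde I$, I would define
\[
	H(p_1,\ldots,p_n) := \widetilde I(\indalg{A_1,\ldots,A_n})
	\quad\text{for any } \indalg{A_1,\ldots,A_n}\text{ with } P(A_i)=p_i.
\]
The Darboux property guarantees that such algebras exist for every probability vector, and the hypothesis that $\widetilde I$ depends solely on measures of atoms guarantees that $H$ is well-defined and symmetric (a permutation of atoms is a special case of having the same measures of atoms). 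Additivity $H(p\otimes q)=H(p)+H(q)$ is then read off from the additivity of $\widetilde I$: given $p,q$, \mref{Corollary}{crr:indep} lets us realize the two vectors as atom-measures of independent algebras $\alg A\indep\alg B$, whence $\widetilde I(\alg A\tim\alg B)=\widetilde I(\alg A)+\widetilde I(\alg B)$ translates verbatim into the desired identity for $H$. Thus $\widetilde I = H_P$.

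For the continuity assertions: $L_\m$ is always continuous since $\m\ll P$ (\mref{Remark}{rmk:examples}). Under the stronger hypothesis that $I$ itself is continuous, $H_P = I - L_\m$ is continuous, which by \mref{Remark}{rmk:seqentrcts} is equivalent to open-domain continuity of the classical entropy $H$. Uniqueness of the pair $(H,\m)$ with $\m(\Omega)=0$ is inherited directly from the uniqueness clause of \mref{Theorem}{thm:cts}, together with the injectivity of $H\mapsto H_P$ (an algebra $\indalg{A_1,\ldots,A_n}$ with prescribed atom measures exists by Darboux, so $H_P$ determines $H$). The proof contains no genuine obstacle beyond verifying that symmetry and additivity of $H$ transfer faithfully from the partition side; the substantive work is already carried out in Sections~3--4 and in \mref{Theorem}{thm:cts}.
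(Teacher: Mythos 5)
Your proposal is correct and follows essentially the same route as the paper, which derives \mref{Theorem}{thm:probmain} directly from \mref{Theorem}{thm:cts} together with the two preceding remarks (that a nonatomic probability space has the Darboux property, and that a finitely-additive set function absolutely continuous with respect to a countably-additive measure is itself countably-additive). The only thing you add is an explicit verification that an atom-measure-dependent $\widetilde I$ corresponds to a symmetric additive classical entropy $H$, which the paper leaves implicit.
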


\new{We would like to mention that this result will be transferred to a quantum context in a follow-up paper,~\cite{PS}.}

\begin{ver:old}
To complete the picture we still need a description of the additive partition entropies that
depend solely on the measures of atoms. Clearly any such description
does not depend on the choice of a probability space.
A.~Paszkiewicz has proved the following two Theorems. These results
have not been published so far.

		\begin{thm}		\label{thm:paszk}
	If $\widetilde I$ is an additive partition entropy which is uniformly continuous
	with respect to~$\rho_2$ and depends solely on the measures of atoms
	then there exist numbers $\alpha, \beta$ such that
	\[
		\widetilde I = \alpha L_\mu + \beta V,
	\]
	where $L_\mu$ is the Shannon entropy (see \mref{Example}{exm:H}) and
	$V$ is the additive partition entropy of \mref{Example}{exm:V}. \end{thm}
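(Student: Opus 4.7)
The plan is to reduce the problem to a classical statement about symmetric additive entropies on probability vectors, calibrate $\alpha,\beta$ on two-atom test algebras, and then use the uniform $\rho_2$-continuity hypothesis to kill every additive contribution beyond the first two cumulants of the random variable $L(\alg A)$. Since $\widetilde I$ depends solely on atom measures it factors through a symmetric classical entropy $H$ satisfying $H(p\otimes q)=H(p)+H(q)$; equivalently, $\widetilde I(\alg A)$ depends only on the distribution $\pi_{\alg A}$ of $L(\alg A)$, and additivity of $\widetilde I$ on independent products becomes additivity of the induced functional $J$ under convolution, $J(\pi*\sigma)=J(\pi)+J(\sigma)$.

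First I calibrate $(\alpha,\beta)$ by evaluating $\widetilde I$ on two two-atom test algebras, say with atom measures $(1/2,1/2)$ and $(1/4,3/4)$; on these the pair $(L_\mu,V)$ takes linearly independent values, so a $2\times 2$ linear system singles out a unique pair $(\alpha,\beta)$ for which the remainder $\widetilde I_0 := \widetilde I - \alpha L_\mu - \beta V$ vanishes on both test algebras. The remainder $\widetilde I_0$ inherits additivity, dependence on atom measures only, and uniform $\rho_2$-continuity, so the theorem reduces to showing $\widetilde I_0\equiv 0$.

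The technical backbone is the cumulant generating function
\[
	\phi_{\alg A}(t) := \log\sum_i \mu(A_i)^{1-t} = \log \expected e^{tL(\alg A)},
\]
which is additive in $\alg A$ under independent products and has Taylor expansion $\phi_{\alg A}(t)=\kappa_1 t+\kappa_2 t^2/2+\kappa_3 t^3/6+\dotsb$, where $\kappa_1=L_\mu$, $\kappa_2=V$, and $\kappa_k(\alg A)$ is the $k$-th cumulant of $L(\alg A)$. Each cumulant $\kappa_k$, and each evaluation $\alg A\mapsto\phi_{\alg A}(t_0)$ for $t_0\ne 0$ (a rescaled R\'enyi entropy), is itself an additive partition entropy depending only on atom measures. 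I would represent $\widetilde I_0$ as a generalized linear combination of these building blocks via a structure theorem for symmetric additive classical entropies, and then use uniform $\rho_2$-continuity to kill every contribution except $\kappa_1$ and $\kappa_2$: since $\rho_2(\alg A,\alg B)=\|L(\alg A)-L(\alg B)\|_2\ge W_2(\pi_{\alg A},\pi_{\alg B})$ bounds only the $L^2$-geometry of $L$, hence only the first two moments of $\pi_{\alg A}$, one exhibits explicit $\rho_2$-small perturbations (for instance, splitting a small atom at a fixed ratio) on which higher cumulants or R\'enyi-type functionals change by a definite amount, and amplifies them via independent tensor products; additivity then forces the coefficient of the corresponding building block in $\widetilde I_0$ to vanish.

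The main obstacle is making this representation-and-elimination step rigorous. The cleanest route is a L\'evy--Khinchin-style theorem for additive functionals on the convolution semigroup of admissible distributions $\pi_{\alg A}$: such a representation describes the additive functional in terms of a ``drift'' (mean), a ``diffusion'' (variance), and a L\'evy measure accounting for higher moments. Uniform $\rho_2$-continuity, which via $\rho_2\ge W_2$ descends to uniform Wasserstein-$2$ continuity of $J$, caps the representation at the first two moments and forces the L\'evy part to vanish, yielding $\widetilde I_0\equiv 0$ and hence $\widetilde I=\alpha L_\mu+\beta V$.
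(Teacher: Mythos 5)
First, a point of reference: the paper offers no proof of this statement. Theorem~\ref{thm:paszk} sits in a commented-out block and is explicitly attributed to A.~Paszkiewicz as an unpublished result, so there is no argument in the text to compare yours against; your proposal has to stand on its own.

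It does not yet stand. The reduction to a functional $J$ on the convolution semigroup of laws $\pi_{\alg A}=\sum_i \mu(A_i)\,\delta_{\log(1/\mu(A_i))}$, and the calibration of $(\alpha,\beta)$ on $(1/2,1/2)$ and $(1/4,3/4)$, are fine. The gap is that the entire content of the theorem is concentrated in the step you defer to a ``L\'evy--Khinchin-style theorem'': no such off-the-shelf result classifies additive real-valued functionals (homomorphisms into $(\R,+)$) on this particular, quite rigid, convolution semigroup --- L\'evy--Khinchin describes infinitely divisible \emph{laws}, not additive functionals, and without regularity such functionals can be wild (Cauchy-equation pathologies), so the regularity hypothesis must be exploited through an explicit construction, which you do not give. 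Two subsidiary steps also fail as stated. (i) The inequality $\rho_2(\alg A,\alg B)\ge W_2(\pi_{\alg A},\pi_{\alg B})$ points the wrong way for descending uniform $\rho_2$-continuity of $\widetilde I$ to uniform $W_2$-continuity of $J$: you would need that any two laws close in $W_2$ admit partition representatives whose $L$'s are close in $L^2$, i.e.\ that near-optimal couplings are realizable by actual pairs of partitions (plausible via the Darboux property, but unproved here), and even granting that, uniform $W_2$-continuity does not by itself force a functional to depend only on the first two moments. (ii) The amplification argument (comparing $n$-fold independent products, where the functional difference scales like $n$ while $\rho_2$ scales like $\sqrt n$) only bites when the increments $L(\alg A^{(i)})-L(\alg B^{(i)})$ are centred, i.e.\ when $\expected L(\alg A)=\expected L(\alg B)$, and even then it only shows $\widetilde I_0(\alg A)=\widetilde I_0(\alg B)$ for pairs admitting such a coupling; turning this into the statement that $\widetilde I_0$ vanishes identically, given only that it vanishes on two test algebras, requires showing that every algebra can be connected to a suitable combination of the test algebras through such couplings while matching both $L_\mu$ and $V$ --- a genuinely nontrivial construction on the constrained set $\{(p_i):\sum p_i=1\}$ that the proposal does not supply.
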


This leads on to the following general characterisation of all additive partition entropies wich are
uniformly continuous with respect to~$\rho_2$.

		\begin{thm}
	If $I$ is an additive partition entropy uniformly continuous with respect to~$\rho_2$,
	then there is a countably-additive set function $\m$
	absolutely continuous with respect to~$\mu$ and a number $\beta$ such that
	\begin{equation}		\label{eqn:mainpaszk}
		I(\alg A) = L_\m(\alg A) + \beta V(\alg A) =
		\int \! H(\alg A) \, d\m + \beta \! \int\! \Bigl[
			H(\alg A) -  \int \! H(\alg A) \, d\mu
		\Bigr]^2 d\mu.
	\end{equation}	\end{thm}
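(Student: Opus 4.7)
The plan is to derive the representation by combining Theorem~\ref{thm:cts} (the decomposition $I = \widetilde I + L_\m$ already established) with a cited classification result of Paszkiewicz asserting that every additive partition entropy which depends solely on the measures of atoms and is uniformly continuous with respect to the pseudometric $\rho_2(\alg A, \alg B) := \norm{L(\alg A) - L(\alg B)}_2$ has the form $\alpha L_\mu + \beta V$ for some $\alpha, \beta \in \R$. Granted both of these, the theorem becomes a matter of recomposition: strip off an $L_{\m_0}$ via Theorem~\ref{thm:cts}, classify the residue $\widetilde I$ by Paszkiewicz's theorem, and merge the two $L$-summands into a single $L_\m$.

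First I would observe that uniform $\rho_2$-continuity of $I$ implies ordinary open-domain continuity of $I$, in particular continuity on $\A_2$. This is a short direct calculation: if $d(\alg A,\alg B)$ is small enough that $\alg A$ and $\alg B$ have the same number of nonzero-measure atoms, then pairing atoms up shows that the log-weights $\log 1/\mu(A_i)$ and $\log 1/\mu(B_i)$ differ uniformly little while cross terms from $A_i \cap B_j$ with $i\neq j$ carry small $\mu$-mass, making $\rho_2(\alg A,\alg B)$ small. Theorem~\ref{thm:cts} then supplies $I = \widetilde I + L_{\m_0}$ with $\widetilde I$ depending solely on the measures of atoms and $\m_0$ a countably-additive set function absolutely continuous with respect to $\mu$ and vanishing on $\Omega$.

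Next I would verify that $\widetilde I = I - L_{\m_0}$ retains uniform $\rho_2$-continuity, which reduces to showing the same for $L_{\m_0}$. Writing $g := d\m_0/d\mu$ (which exists in $L^1$ by the Radon--Nikodym theorem together with the countable-additivity Remark just before Theorem~\ref{thm:probmain}) and applying H\"older's inequality one obtains
\[
    \abs{L_{\m_0}(\alg A) - L_{\m_0}(\alg B)} = \abs[\Big]{\int (L(\alg A) - L(\alg B))\, g\, d\mu} \le \norm{g}_2 \cdot \rho_2(\alg A, \alg B),
\]
provided $g \in L^2$. Paszkiewicz's cited theorem then delivers $\alpha, \beta \in \R$ with $\widetilde I = \alpha L_\mu + \beta V$, and setting $\m := \m_0 + \alpha \mu$ yields $L_\m = L_{\m_0} + \alpha L_\mu$ and hence the asserted $I = L_\m + \beta V$. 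The set function $\m$ is absolutely continuous with respect to $\mu$ as a sum of two such, and is therefore countably additive.

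The main obstacle lies in the $L^2$-integrability of $g$ used in the H\"older step: in general $\m_0 \ll \mu$ yields only $g \in L^1$. The cleanest repair is to bypass $g$ entirely and estimate $L_{\m_0}$-increments using the bound $\abs{\m_0(A)} \le \sup\abs{\Delta(\cdot,A)}$ from Lemma~\ref{lem:keycts}, combined with a modulus-of-continuity calculation that controls $L_{\m_0}$-oscillations directly in terms of $\rho_2(\alg A,\alg B)$ through the uniform continuity of $\Delta$. Once this quantitative continuity-transfer is secured, Paszkiewicz's theorem applies and the recomposition $I = L_{\m_0 + \alpha\mu} + \beta V$ finishes the proof.
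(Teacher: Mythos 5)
Your proposal follows the paper's own route exactly: the paper's proof is a one-line citation of the corollary that uniform $\rho_2$-continuity implies continuity, of the decomposition $I = H_P + L_{\m_0}$, and of Paszkiewicz's classification of the atom-measure-dependent part as $\alpha L_\mu + \beta V$, followed by the same recombination $\m := \m_0 + \alpha\mu$. If anything you are more scrupulous than the paper, which leaves implicit the step that $\widetilde I = I - L_{\m_0}$ inherits uniform $\rho_2$-continuity --- precisely the point where you observe that the naive H\"older argument would require $d\m_0/d\mu \in L^2$ and sketch a replacement via the bound $\abs{\m_0(A)} \le \sup \abs{\Delta(\cdot,A)}$.
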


		\begin{proof}
	It follows directly from \mref{Theorem}{thm:probmain},
	\mref{Theorem}{thm:paszk} and \mref {Corollary} {crr:cty}. \end{proof}
\end{ver:old}

%Finally, we would like to pose a few problems:
%
%		\begin{prob}
%	Let $\alg A_1 \subset \alg A_2 \dotsb$ be a filtration of algebras in~$\A$.
%	For what sequences $a_1, a_2, \dotsc \in \R$
%	there is an additive partition entropy \meqref{of shape}{eqn:mainpaszk} such that
%	\[
%		I(\alg A_k) = a_k \qquad (k \ge 1) ?
%	\] \end{prob}

%		\begin{prob}
%	Describe the family of all additive partition entropies~$I$ which are continuous
%	with respect to the metric $\rho_s$, where $s>2$. \end{prob}

%%%		\begin{prob}
%	For which additive partition entropies~$I$ there are Gleason measures on
%	orthogonal projections $L_2(\Omega,\alg \Sigma, P)$ such that
%	\[
%		I(\alg A) = \Phi(P_{\alg A})	\qquad (\alg A \in \A),
%	\]
%	where $P_{\alg A}$ denotes the perpendicular projection onto the
%	space of $\alg A${\dash}measurable functions having their expected value
%	equal to~$0$ (c.f. \cite {V})?	\end{prob}
%
%		\begin{prob}
%	The relationship among the important metrics and topologies in $\A$
%	is only partially described. \end{prob}

\begin{ver:old}

	\section*{Acknowledgements}

I would like to express my deepest gratitude for my mentor professor Adam Paszkiewicz
who proposed the problem of the paper and also devoted a lot of time discussing
it. I owe him a  huge part of work. Perhaps his most important contribution
is providing at a crucial moment \mref{Example}{exm:H}, which does not depend on measures
of atoms and which lead him to formulate a version of the main result of this paper
together with some rough idea about its proof.
Professor Paszkiewicz has also modified the result that came to be
\mref{Proposition}{prn:same}, proved \mref {property} {prop:cont}
in \mref {Proposition} {prn:delta}, supplied me with~\mref{Example}{exm:V}, and
formulated \mref {Remark} {rmk:darbatom}. To top it all,
he went through and verified the paper several times,
and had a pronounced influence on its editing.

I give my heartfelt thanks to my wife Maria and daughter Amelia;
my wife for the time I could have sacrificed working on the paper; my daughter
for the tranquil, well-slept nights. The paper is devoted to Amelka,
who was born shortly after I started thinking on the problem.
\end{ver:old}

%\newpage


\begin{thebibliography}{9}
	\bibitem{Acz}
		\textsc{Aczél, J.} (1978). A mixed theory of information-II: Additive inset entropies
		(of randomized systems of events) with measurable sum property.
		\textit{Utilitas Math.} \textbf{13} 49--54.
	\bibitem{AD}
		\textsc{Aczél, J.} and \textsc{Daróczy, Z.} (1975). \textit{On measures of information
		and their characterizations.}
		Academic Press, Inc.
	\bibitem{AD2}
		\textsc{Aczél, J.} and \textsc{Daróczy, Z.} (1978).
		A mixed theory of information-I: Symmetric, recursive and  measurable entropies
		of randomized system of events.
		\textit{RAIRO Informat. Theor.} \textbf{12} 149--155. 
	\bibitem{AFN}
		\textsc{Aczél, J.}, \textsc{Forte, B.}, and \textsc{Ng, C. T.} (1974).
		Why the Shannon and Hartley entropies are 'natural'.
		\textit{Adv. in Appl. Probab.} \textbf{6} 131--146.
	\bibitem{A}
		\textsc{Artstein, P.} (1992). Sensitivity to
		$\sigma${\dash}fields of information in stochastic allocation.
		\textit{Stochastics and Stochastics Reports} \textbf{36} 41--63.
	\bibitem{BFL}
		\textsc{Baez, J. C.}, \textsc{Fritz, T.}, and \textsc{Leinster, T.}
		A characterization of entropy in terms of information loss.
		\textit{(preprint)}
	\bibitem{BN}
		\textsc{Behara M.} , and \textsc{Nath, P.} (1973).
		Additive and Non-additive entropies of finite measurable partitions.  in
		\textit{Probablity and Information theory II, LNM }	\textbf{296} 102--138.
	\bibitem{BBCM}
		\textsc{Bennet, C., Brassard, G., Crépeau, C.}, and \textsc{Maurer, U.}  (1995).
		Generelized privacy amplification.
		\textit{IEEE Trans. Inf. Theory} \textbf{41} 1915--1923
	\bibitem{Ca2}
		\textsc{Campbell, L. L.} (1965).
		 A coding theorem and Rényi's entropy.
		\textit{Information and Control} \textbf{8} 423--429
	\bibitem{Ca}
		\textsc{Campbell, L. L.} (1985).
		The relation between information theory and the differential geometry approach
		to statistics
		\textit{Inform. Sci.}	\textbf{35} 199--210.
	\bibitem{CT}
		\textsc{Cover, T. M.} and \textsc{Thomas, J. A.} (1991).
		\textit{Elements of Information Theory}.
		John Wiley \&  Sons, Inc.
	\bibitem{Cs}
		\textsc{Csiszár, I.} (2008).
		Axiomatic Characterizations of Information Measures.
		\textit{Entropy} \textbf{10} 261--273.
	\bibitem{E}
		\textsc{Ebanks, B. R.} (1986). Measures of inset information on open domains-II:
		additive inset entropies with measurable sum property.
		\textit{Prob. Th. Rel. Fields} \textbf{73} 517--528.
	\bibitem{F}
		\textsc{Fadeev, D. K.} (1956).
		Towards the notion of finite entropy of a system of probabilities. (in russian)
		\textit{Uspiehi. matem. nauk.} \textbf{11} 227--231.
	\bibitem{GT}
		\textsc{Guiaşu, S.} and \textsc{Theodorescu, R.} (1986).
		\textit{La Théorie Mathématique de l'Information}.
		Dunod, Paris.
	\bibitem{H}
		\textsc{Hartley, R. V.} (1928).
		Transmission of information
		\textit{Bell System Tech. J.} \textbf{7} 535--563.
%	\bibitem{IU}
%		\textsc{Ingarden, R. S.} and \textsc{Urbanik, K.} (1962).
%		Information without probability.
%		\textit{Col. Math.} \textbf{9} 281--304.
	\bibitem{KS}
		\textsc{Kannappan, P.} and \textsc{Sander, W.} (2004).
		Inset information measures on open domains.	
		\textit{Aequationes Math.} \textbf{68} 289--320.
%	\bibitem{Ch1}
%		\textsc{Khinchin, A. Ja.} (1953).
%		The notion of entropy in probability theory. (in russian)
%		\textit{Uspiehi. matem. nauk.} \textbf{8} 3--20.
%	\bibitem{Ch2}
%		\textsc{Khinchn, A. Ja.} (1956).
%		On basic Theorems of information theory. (in russian)
%		\textit{Uspiehi. matem. nauk.} \textbf{11} 17--75.
	\bibitem{NLM}
		 \textsc{Ng, C. T.}, \textsc{Luce, R. D.}, and \textsc{Marley, A.} (2007).
		Utility of Gambling when Events are Valued: an Application of Inset Entropy.
		\textit{Theory and Decision} \textbf{67} 23--63.
	\bibitem{P}
		\textsc{Paszkiewicz, A.}
		\textit{On generaliztion and pricing of information}.
		(in preparation)
	\bibitem{PS}
		\textsc{Paszkiewicz, A.}, and \textsc{Sobieszek, T.}
		\textit{On quantum information}.
		(preprint)
	\bibitem{R}
		\textsc{Rényi, A.} (1960).
		On measures of entropy and information
		\textit{Proc. 4th Berkeley Symp. Math. Stat. Probab.} \textbf{1} 547--561.
	\bibitem{R2}
		\textsc{Rényi, A.} (1965).
		On the foundations of information theory
		\textit{Rev. Inst. Internat. Stat.} \textbf{33}, 1--4.
	\bibitem{S}
		\textsc{Shannon, C. E.}, and \textsc{Weaver, W} (1964).
		\textit{The mathematical theory of communication}.
		The University of Illinois press.
	\bibitem{SJ}
		\textsc{Simovici, D. A.}, and \textsc{Jaroszewicz, S.} (2002).
		An axiomatization of partition entropy.
		\textit{IEEE Trans. Inf. Theory}, \textbf{48} 2138--2142.
	\bibitem{So}
		\textsc{Sobieszek, T.}
		\textit {Noncontinous additive entropies of partitions}
		(preprint, {\tt arXiv:1202.4590 [cs.IT]} )
	\bibitem{TWW}
		\textsc{Traub J. F.}, \textsc{Woźniakowski, H.}, and
		\textsc{Wasilkowski, G. W.} (1988).
		\textit{Information-based complexity}.
		Academic Press, Inc.
%	\bibitem{V}
%		\textsc{Varadarajan, V. F.} (1968).
%		\textit{Geometry of Quantum Theory}.
%		D. Van Nostrand Company, Inc.
\end{thebibliography}
\end{document}